\def\BibTeX{{\rm B\kern-.05em{\sc i\kern-.025em b}\kern-.08em
    T\kern-.1667em\lower.7ex\hbox{E}\kern-.125emX}}
\newtheorem{theorem}{Theorem}
\newtheorem{lemma}{Lemma}
\newtheorem{example}{Example}
\newif\ifpublish
    \newcommand{\fulltext}[1]{#1}
    \newcommand{\fulltext}[1]{}
\newif\ifcomments
\newcommand{\Modify}[1]{{\color{blue}{#1}}}
\newcommand{\Modify}[1]{#1}
\newcommand{\Service}{\mathcal{S}}
\newcommand{\Replicas}{\mathcal{R}}
\newcommand{\Miners}{\mathcal{M}}
\newcommand{\Clients}{\mathcal{C}}
\newcommand{\n}[1]{\mathbf{n}_{#1}}
\newcommand{\f}[1]{\mathbf{f}_{#1}}
\newcommand{\Replica}[1][r]{\textsc{#1}}
\newcommand{\Miner}[1][m]{\textsc{#1}}
\newcommand{\MName}[1]{\textsc{#1}}
\newcommand{\BFT}{\textsc{BFT}}
\newcommand{\PoW}{\textsc{PoW}}
\newcommand{\PoS}{\textsc{PoS}}
\newcommand{\PoC}{\textsc{PoC}}
\newcommand{\PBFT}{\textsc{Pbft}}
\newcommand{\pbft}{\textsc{Pbft}}
\newcommand{\hotstuff}{{HotStuff}}
\newcommand{\ResDB}{\text{ResilientDB}}
\newcommand{\MAC}{\textsc{Mac}}
\newcommand{\DS}{\textsc{DS}}
\newcommand{\Algorand}{{Algorand}}
\newcommand{\Ethereum}{{Ethereum}}
\newcommand{\Diem}{{Diem}}
\newcommand{\Quorum}{{Quorum}}
\newcommand{\Diablo}{{Diablo}}
\newcommand{\block}{\mathfrak{B}}
\newcommand{\Bsize}{\sigma}
\newcommand{\Slice}[1]{\mathbb{S}_{#1}}
\newcommand{\TotalSlices}{{\bf u}}
\newcommand{\ShiftRound}{\bf{r}}
\newcommand{\Sequence}{k}
\newcommand{\BlockHeight}{b}
\newcommand{\Token}{\Psi}
\newcommand{\Certificate}{\mathfrak{C}}
\newcommand{\Transaction}[1][t]{\MakeUppercase{#1}}
\newcommand{\Message}[2]{\textsc{#1}(#2)}
\newcommand{\SignMessage}[2]{\langle#1\rangle_{#2}}
\newcommand{\Hash}[1]{\texttt{hash}(#1)}
\newcommand{\Nonce}{\eta}
\newcommand{\prev}{prev}
\newcommand{\abs}[1]{\lvert #1 \rvert}
\tikzset{
    dot/.append style={circle,scale=0.35,draw=black,fill=black},
    sdot/.append style={scale=0.45,draw=black,fill=black},
    label/.append style={align=center,font=\strut\footnotesize},
    >=Stealth,
    every edge/.append style={semithick},
    thread/.append style={align=center,draw,thick,rectangle,text width=1cm,text height=2ex,text depth=.25ex,minimum height=0.75cm,font=\strut},
}
\definecolor{colA}{RGB}{230,159,0}
\definecolor{colB}{RGB}{86,180,233}
\definecolor{colC}{RGB}{0,158,115}
\definecolor{colD}{RGB}{240,228,66}
\definecolor{colE}{RGB}{0,114,178}
\definecolor{colF}{RGB}{213,94,0}
\definecolor{colG}{RGB}{204,121,167}
\definecolor{colH}{RGB}{238,130,238}
\definecolor{colI}{RGB}{64,224,208}
\definecolor{colGrey}{RGB}{211,211,211}
\definecolor{colIvory}{RGB}{255,235,215}
\definecolor{colDeepPink}{RGB}{255,20,147}
\definecolor{colLightRed}{RGB}{255,235,0}
\definecolor{colSkyBlue}{RGB}{135, 206, 235}
\definecolor{colVermillion}{RGB}{227, 66, 52}
\definecolor{colReddishPurple}{RGB}{100, 0, 120}
\definecolor{DarkOrange}{RGB}{255,140,0}
\definecolor{Peru}{RGB}{128,0,0}
\definecolor{deepblue}{rgb}{0,0,0.5}
\definecolor{deepred}{rgb}{0.6,0,0}
\definecolor{lightgreen}{RGB}{34,139,34}
\definecolor{lightyellow}{RGB}{218,112,214}
\definecolor{slateblue}{RGB}{123,104,238}
\definecolor{lightblue}{RGB}{30,144,255}
\definecolor{lightBrown}{RGB}{188,143,143}
\definecolor{deepGreen}{RGB}{0,128,0}
\definecolor{deepBlue}{RGB}{0,0,255}
\definecolor{deepPurple}{RGB}{128,0,128}
\definecolor{fakeGreen}{RGB}{102,205,170}
\definecolor{Maroon}{RGB}{210,105,30}
\pgfplotsset{
    compat=1.16,
    width=195pt,
    height=156pt,
    every axis title shift=0pt,
    max space between ticks=25,
    every axis/.append style={
            cycle list name=mycyclelist,
            ymin=0,
            enlargelimits=0.05,
            scale ticks above exponent=1,
            scaled x ticks=false,
            xtick=data,
            mark size=1pt,
            font=\Large,
            y tick label style={
                    %  /pgf/number format/precision=1,
                    %  /pgf/number format/fixed,
                    %  /pgf/number format/fixed zerofill
                },
            ylabel shift={-5pt}
        },
    every axis legend/.append style={
            cells={anchor=west}
        }
}
\tikzset{
    plot/.append style={baseline,scale=0.65}
}
\newcommand{\linebreakand}{%
  \end{@IEEEauthorhalign}
  \hfill\mbox{}\par
  \mbox{}\hfill\begin{@IEEEauthorhalign}
}
\begin{document}

\title{
Securing Consensus from Long-Range Attacks through Collaboration\\
\thanks{
%This work is partially funded by Oracle Cloud Credits and NSF Award Numbers 2112345 and 2245373. 
  This work is partially funded by NSF Award Number 2245373.
}
}

%\author{Junchao Chen, Suyash Gupta$^{\dagger}$, Alberto Sonnino$^{\ddagger}$, Lefteris Kokoris-Kogias$^{\S \ddagger}$, Mohammad Sadoghi\\
%\IEEEauthorblockA{
%  %\textit{Moka Blox LLC} $\&$ Exploratory Systems Lab\\
%  Exploratory Systems Lab\\
%  University of California, Davis\\
%  $^{\dagger}$University of California, Berkeley \quad\quad\quad \\
%  $^{\ddagger}$ MystenLabs \quad\quad$\ \ \ \ \ $ $^{\S}$ IST Austria
%}
%}

\author{
\IEEEauthorblockN{Junchao Chen}
\IEEEauthorblockA{\textit{Exploratory Systems Lab} \\
\textit{University of California, Davis}\\
California, United States\\
jucchen@ucdavis.edu}
\and
\IEEEauthorblockN{Suyash Gupta}
\IEEEauthorblockA{\textit{University of Oregon} \\
Oregon, United States\\
suyash@uoregon.edu}
\and
\IEEEauthorblockN{Alberto Sonnino}
\IEEEauthorblockA{\textit{Mysten Labs} \\
\textit{University College London (UCL)}\\
London, England\\
alberto@mystenlabs.com}
\linebreakand % <------------- \and with a line-break
\IEEEauthorblockN{Lefteris Kokoris-Kogias}
\IEEEauthorblockA{\textit{Mysten Labs} \\
London, England\\
lefteris@mystnelabs.com}
\and
\IEEEauthorblockN{Mohammad Sadoghi}
\IEEEauthorblockA{\textit{Exploratory Systems Lab} \\
\textit{University of California, Davis}\\
California, United States\\
msadoghi@ucdavis.edu}
}

%\author{
%\IEEEauthorblockN{Junchao Chen}
%\IEEEauthorblockA{\textit{Exploratory Systems Lab} \\
%\textit{University of California, Davis}\\
%California, United States\\
%email address or ORCID}
%\and
%\IEEEauthorblockN{Suyash Gupta}
%\IEEEauthorblockA{\textit{University of Oregon} \\
%Oregon, United States\\
%email address or ORCID}
%\and
%\IEEEauthorblockN{Alberto Sonnino}
%\IEEEauthorblockA{\textit{Mysten Labs} \\
%\textit{University College London (UCL)}\\
%London, England\\
%email address or ORCID}
%\and
%\IEEEauthorblockN{Lefteris Kokoris-Kogias}
%\IEEEauthorblockA{\textit{Mysten Labs} \\
%London, England\\
%email address or ORCID}
%\and
%\IEEEauthorblockN{Mohammad Sadoghi}
%\IEEEauthorblockA{\textit{Exploratory Systems Lab} \\
%\textit{University of California, Davis}\\
%California, United States\\
%email address or ORCID}
%}

\maketitle

\begin{abstract}
%\boldmath
  Decentralized systems built around blockchain technology promise clients 
an immutable ledger. 
They add a transaction to the ledger after it undergoes consensus among the replicas 
that run a Proof-of-Stake (\PoS{}) or Byzantine Fault-Tolerant (\BFT{}) consensus protocol.
Unfortunately, these protocols face a long-range attack where an adversary having access 
to the private keys of the replicas can rewrite the ledger.
An existing solution to this problem forces each committed block from these protocols to undergo another consensus, Proof-of-Work (\PoW{}) consensus;
\PoW{} protocol wastes computational resources as miners compete to solve complex puzzles.
In this paper, we present the design of our Power-of-Collaboration (\PoC{}) protocol, which guards existing
\PoS{}/\BFT{} blockchains against long-range attacks and requires miners to collaborate rather than compete.
\PoC{} guarantees fairness and accountability and only marginally degrades the throughput of the underlying system.

\end{abstract}

\begin{IEEEkeywords}
Fault Tolerance, Blockchain, Security, Long-range Attack, Consensus, Transactions, Reliability, Proof-of-Work, Proof-of-Stake, Collaborative Mining. 
\end{IEEEkeywords}

\section{Introduction}
\label{s:intro}
Decentralized systems  built using blockchain technology promise their clients an immutable and verifiable ledger~\cite{bitcoin,ether,algorand}. 
These systems receive client transactions and use state machine replication to 
add these transactions to the ledger.
As these systems are often composed of untrusting nodes, some of which are malicious or Byzantine, 
establishing state machine replication requires these systems to run a {\em consensus} protocol that can handle malicious attacks~\cite{pbftj,hotstuff}.
The two most widely adopted categories of these consensus protocols are:
{Proof-of-Stake} (\PoS{}) protocols~\cite{algorand,ppcoin} and traditional Byzantine Fault-Tolerant (\BFT{}) protocols~\cite{pbftj,hotstuff}.
 
Stake-oriented consensus protocols, such as Proof-of-Stake (\PoS{}) protocols, 
use a probabilistic distribution to decide which node gets to add a new block of client transactions to the ledger; 
often, the nodes with a higher stake (or wealth) have a higher probability of proposing a new block~\cite{ppcoin}. 
Communication-oriented protocols, such as traditional \BFT{} protocols, 
give each node an equal opportunity (a vote) to add an entry to the ledger; 
agreement on the next block is reached through successive rounds of vote exchange~\cite{pbftj,zyzzyva}. 
Some systems combine \PoS{} and \BFT{} to yield efficient consensus~\cite{algorand}.
Despite these differences, these protocols follow the same design: 
each block added to the ledger includes the {\em digital signatures} of a quorum of participants to prove
that a quorum agreed to update the ledger.

Unfortunately, any decentralized system that employs these protocols suffers from a well-known attack: 
a {\em long-range attack} where an adversary attempts to create an 
alternate ledger and targets clients (or new participants) that cannot distinguish between the original ledger 
and the adversarial ledger~\cite{winkle,bft-longrange,sperax,bitcoin-pos}.
An adversary can launch a long-range attack on systems running \PoS/\BFT{} consensus protocols due to the following reason:
In \PoS/\BFT{} protocols, it is {\em computationally inexpensive} for nodes to add a new block to the ledger.
An adversary with {\em access to the private keys} of the honest nodes can use these keys to create an alternate ledger;
following are the two ways to access the private keys of others: 

\begin{enumerate}
\item {\em Stealing.}
An adversary can attempt to steal the keys of the nodes;
stealing private keys is a widespread attack, and such attacks have resulted in losses of up to \$200 million~\cite{cointelegraph-keyattack,bein-keyattack,private-key-breaches}.

\item {\em Bribing.}
An adversary can bribe honest nodes to sell their private keys, especially nodes that once participated in the system and no longer have any stake in it.
This bribery attack is feasible because decentralized systems expect to run for years and 
cannot guarantee that the original set of participants will always run the system.
Based on the Tragedy of the Commons~\cite{proof-of-activity}, rational participants will opt to earn further incentives by selling their keys.
\end{enumerate}

Once an adversary has access to these keys, it can use them to fork the original ledger at a specific block number 
and create an adversarial ledger with alternate blocks.
\Modify{If it has stolen over 50\% of the private keys, the adversary can control the whole train.} 
Or {\em the adversary waits for existing nodes to leave and new nodes to join the system}.
Once it discovers a new node interested to join the system, 
the adversary presents its adversarial ledger as the authentic ledger 
to the new node, which unfortunately {\em cannot distinguish between the two}.
It is hard for new nodes to distinguish between the ledgers despite the existence of the honest nodes in the system who have access to the original ledger because the adversary has used the private keys of such honest nodes to forge their identities.

A naive solution to this problem is to ensure that honest nodes never leave the system and that no new node can join it. 
However, this solution is impossible to implement in a decentralized setting as nodes frequently leave/join the system~\cite{ethereum-exit,bitcoin-entry}.
Other prior attempts include:
(1) Using key-evolving cryptographic techniques and increasing the number 
of keys an adversary needs to compromise~\cite{winkle,key-evolving}, which typically delays the imminent long-range attack.
(2) Creating state checkpoints and storing them at all the nodes, assuming that an 
adversary can only compromise the keys of at most one-third of nodes~\cite{gasper,bitcoin-pos} 
(3) Periodically appending the ledger state to the Bitcoin blockchain~\cite{tas2022babylon,azouvi2022pikachu}.
Indeed, the third direction can guard existing decentralized systems against long-range attacks. 
For an adversary to present an adversarial chain to the new nodes, it also needs to rewrite the Bitcoin ledger, which is computationally infeasible.
Bitcoin employs the \PoW{} consensus protocol,
which follows a {\em computation-oriented} model as it requires all the nodes to compete toward solving a complex puzzle. 
Whichever node solves the puzzle first adds a new block and receives a reward as compensation for its efforts. 
As \PoW{} nodes constantly compete with each other, \PoW-based solutions lead to the wastage of computational resources, as there is only one winner.~\cite{badcoin}.

The challenges existing solutions face while eliminating long-range attacks make us conclude that any solution for long-range attacks should:
(1) not rely on the long-term safe-keeping of private keys,
(2) reduce wastage of computational resources, and
(3) be computationally expensive for an attacker to rewrite the ledger.

In this paper, we introduce {\em Power-of-Collaboration} (\PoC{}) protocol, which, when 
appended to decentralized systems running \PoS{}/\BFT{} consensus, helps to meet the aforementioned goals.
\PoC{} is noninvasive as it works on the output of underlying \PoS{}/\BFT{} consensus protocol and
has minimal impact on the performance of existing decentralized systems.
\PoC{} advocates for {\em collaborative mining}, which, like \PoW{}, requires miners to solve a compute-intensive puzzle, 
but all the miners are now working together (instead of competing) to solve the same puzzle.
%This makes it computationally expensive for an attacker to rewrite the chain while significantly cutting down the costs associated with competitive mining.

The most closely related work, Bitcoin's {\em centralized mining pools}, also attempts to reduce the costs associated with mining~\cite{smartpool,p2pool,poolparty}. 
As the name indicates, these mining pools are centralized and managed by an organization. 
The organization sets the rules for the mining pool, decides which node should receive a reward and how much reward, and controls which node can participate in the pool.
Not only is the existence of centralized mining pools against the ethos of a decentralized system, 
but the managing organization charges fees for management without spending any computational resources.
Further, attempts to create a decentralized mining pool have been unsuccessful due to nodes not doing designated tasks and lack of accountability:
the last block added by any decentralized mining pool in Bitcoin was in 2019~\cite{poolparty,smartpool}.

\PoC{}, in essence, functions as a single decentralized mining pool where all the nodes collaborate to find a solution for the compute-intensive puzzle. 
Like \PoW{}, nodes are still spending their computational resources to find the nonce, 
which makes it computationally expensive for the adversary to create an adversarial ledger. 
However, we need to ensure that, like centralized mining pools, we reduce the wastage of computational resources while also guaranteeing decentralization and {\em fairness}. 
We do so by splitting the compute-intensive puzzle into a set of unique sub-problems, and 
each node works on a unique subset of these sub-problems; the solution to the original compute-intensive problem is present in these subsets. 
We also need to provide {\em accountability} and deter malicious nodes from not doing work, as it can delay the discovery of the solution. 
\PoC{} does so through our slice-shifting protocol, which identifies and penalizes a malicious miner and transfers its work to honest miners.
\Modify{To allow participants join and leave the mining group, 
\PoC{} also provides a reconfiguration solution to reschedule the problem assignment.}

To show that \PoC{} is effective in practice,
we append it to several decentralized systems.
In our first set of experiments, we append \PoC{} to Apache's \ResDB{} (Incubating)~\cite{apache-resdb} as
it provides access to an open-source permissioned blockchain platform and 
an optimized implementation of \PBFT{}, a \BFT{} consensus protocol.
\ResDB's \pbft{} implementation adds approximately $1000$ blocks per second on a system of $128$ replicas, and 
our experiments illustrate that \PoC{} can sustain this throughput on a system of $128$ miners and requires $29\times$ less mining time than 
Bitcoin's \PoW{} protocol.
%Notice that this $1000$ blocks per second is a much
%higher yield than a majority of state-of-the-art blockchains. 
In our final set of experiments, we use the \Diablo{}~\cite{diablo} benchmarking framework 
to append \PoC{} to {\em four} popular blockchain systems, namely \Diem{}~\cite{diembft},
\Algorand{}~\cite{algorand}, \Ethereum~\cite{ether}, and \Quorum{}~\cite{quorum}.
Our results illustrate that \PoC{} has a minimal impact ($\approx10\%$) on the throughput of these blockchains.
Next, we list our contributions.
\begin{itemize}
    \item We present the Power-of-Collaboration (\PoC{}) protocol, which, 
	when appended to existing decentralized systems running \PoS{} and \BFT{} protocols, makes it computationally-expensive 
	for an adversary to launch a long-range attack.

    \item \PoC{} introduces the notion of collaborative mining, which divides the mining task among all the miners.
	% utilizes the efforts of each honest miner by 
	%requiring all the miners to collaborate on solving a complex puzzle. 
	  %This also makes it impractical for malicious
          %miners to rewrite the ledger, as to do so, they would need more resources than the combined
          %resources of honest miners.

    \item \PoC{} advocates fairness and accountability: 
	rewards are distributed among the miners in proportion to their share of work, and 
	Byzantine behavior is quickly detected and penalized through the slice-shifting mechanism.
\end{itemize}

{\em Outline.}
\fulltext{
In \S\ref{sec:preliminaries}, we present the system model. 
}
In \S\ref{s:back}, we discuss various types of consensus protocols, 
pooled mining, and long-range attacks.
In \S\ref{s:dual} and \S\ref{s:attacks}, 
we present the design of our \PoC{} protocol and discuss the impact of malicious attacks.
In \S\ref{s:discussion}, we discuss the impact of mining difficulty and
reconfiguration on POC and present the correctness proof and security analysis.

\section{Preliminaries} \label{sec:preliminaries}

\Modify{{\bf System Model.}}
We adopt the standard communication and failure model adopted by most consensus protocols~\cite{pbftj,zyzzyva,sbft}.
We assume the existence of a decentralized system 
\fulltext{
$\Service$ of the form 
}
$\Service = \{\Replicas{}, \Clients{} \}$. The set $\Replicas$
consists of $\n{\Replicas{}}$ replicas (or stakeholders in case of \PoS{} protocols)
of which at most $\f{\Replicas{}}$ can behave arbitrarily \Modify{and $\n{\Replicas{}} \ge 3\f{\Replicas{}} + 1$}. 
\fulltext{
In a typical decentralized system, these replicas store the state and participate in consensus. 
}
The remaining
$\n{\Replicas}-\f{\Replicas}$ are honest: they follow the protocol and remain live.
We also assume the existence of a finite set of clients $\Clients$, of which
arbitrarily many can be malicious.

\Modify{We denote the set of miners as $\Miners$ and the total number of miners as $\n{\Miners{}}$,
of which at most $\f{\Miners{}}$ can act maliciously. 
Following the same setting of Bitcoin~\cite{bitcoin} and Ethereum~\cite{ethereum-stake-withdrawl},
the system can tolerant half of the adversary miners ($\n{\Miners} \ge 2\f{\Miners}+1$).}

{\bf Miner Staking.}
\fulltext{
Unlike \PoW-based systems, where any node can start mining without informing everyone about its existence, 
}
We make similar assumptions as most \PoS-based systems~\cite{algorand,cardano}:
we require knowledge of the total number of miners participating in the mining process.
Like all the \PoS{} systems, which require any node wishing to become a stakeholder to {\em stake} some of its resources, 
we need each miner wishing to participate in the \PoC{} mining to {\em stake} its resources.
This staking determines how much work a miner must perform
(more on this in \S~\ref{ss:collaboration} and~\ref{ss:staking}).

%We denote the set of miners as $\Miners$ and the total number of miners as $\n{\Miners{}}$,
%of which at most $\f{\Miners{}}$ can act maliciously ($\n{\Miners} \ge 2\f{\Miners}+1$).
%%In Section~\ref{ss:discovery}, we show how miners can leave and join \PoC{} mining.
%%The roles of both miners and replicas can be played by the same node;
%%for the sake of exposition, we denote miners and replicas are different nodes.

{\bf Authenticated communication.} Replicas/miners employ
\fulltext{
standard cryptographic primitives such as 
}
\MAC{} and digital signatures (\DS{})
to sign messages and accept only {\em well-formed} messages.
%If a message $m$ needs \DS{} of a replica $\Replica{}$,
%we explicitly denote it as {\bf $\SignMessage{m}{\Replica{}}$}. 
We use a \emph{collision-resistant}
hash function $\Hash{\cdot}$ to map an arbitrary value $v$ to a constant-size digest~\cite{cryptobook}. 
\fulltext{
Each replica/miner only accepts a message if it is {\em well-formed}.
}

{\bf Standard Adversary model.}
Almost all the prior systems assume this adversarial model:
the adversary can corrupt at most $\f{\Replicas}$ replicas and $\f{\Miners}$ miners,
delay, and reorder messages~\cite{pbftj,algorand,bitcoinng}.
Byzantine replicas can perform any attack permitted by the underlying \PoS/\BFT{} protocol.
Byzantine miners can avoid participating in the mining protocol and issue invalid solutions for the puzzle.

    {\bf Advanced Adversary model.}
Additionally, we assume that the adversary can somehow access the private keys of all the replicas/miners.
Using these private keys, the adversary can attempt a long-range attack to overwrite the \PoS/\BFT{} ledger.
%Note: the combined computational power of honest miners is more than
%that of an advanced adversary.

We assume that the underlying consensus protocol states a mechanism 
for replicas to join or leave $\Service$, and this knowledge is percolated
to all the existing members of $\Service$.
In \S~\ref{ss:reconfiguration}, we discuss how \PoC{} miners can leave or join the system.
\PoC{} offers {\bf Sybil resistance} like existing \PoS{} systems; 
each miner must stake its resources before the start of the mining, 
which it cannot arbitrarily cash out.

{\bf Anonymity.}
%We consider the topic of anonymity of replicas and miners separately.
Like existing \PoW{} systems, we assume {\em pseudo-anonymity}
for miners in set $\Miners$; they are identified only through their {\em public keys}, which they may hold many.
Similarly, like any \PoS{} system knows the total number of stakeholders,
we know $\n{\Miners}$, which allows us to assign a unique identifier to each miner in the range  $[0, \n{\Miners}]$.
%
%The anonymity for replicas depends on the \BFT{} protocol.
%For example, several \BFT{} protocols need to know the identities of their
%participants before the start of consensus~\cite{pbftj,hotstuff}.
Each replica/stakeholder is also assigned an identifier in the range of $[0, \abs{\Replicas}]$.
%which can be obtained by a call to the function $\ID{}$. 

Finally, we expect the underlying \PoS/\BFT{} protocol to provide the following standard guarantees:
\begin{description}[\IEEEsetlabelwidth{Liveness}\IEEEusemathlabelsep]
    \item[\bf Safety.]
        If two honest replicas $\Replica{1}$ and $\Replica{2}$ order transactions
        $\Transaction{}$ and $\Transaction'$ at sequence numbers $k$, then $\Transaction{} = \Transaction'$.

    \item[\bf Liveness.]
        If a client sends a transaction $\Transaction{}$, then it will eventually
        receive a response for $\Transaction{}$.
\end{description}

\section{Background}
\label{s:back}
We begin by presenting the necessary conceptual background.

\subsection{\PoS{} and \BFT{} Consensus}
\PoS{} consensus protocols allow each node to add the next block to the blockchain in proportion to its invested stake~\cite{algorand,cardano}.
Often, the stake is equivalent to a monetary token or currency. 
%As a result, the higher the stake a node invests, the greater the probability of it add a block. 
Once a stakeholder
proposes the next block, all the other nodes also sign this block, which acts like an agreement
among the nodes.
Similarly, \BFT{} protocols~\cite{pbftj,hotstuff} designate in each round a
replica as a leader, which proposes a block.
Following this, all the replicas work through multiple rounds of message exchange to ensure that the proposed block has the
support of a quorum of honest replicas.

\subsection{Long-range attack on \PoS{} and \BFT{}}
A known attack that affects both \PoS~\cite{tas2022babylon,azouvi2022pikachu} and \BFT~\cite{bft-longrange,winkle} protocols
is the long-range attack, where an attacker attempts to create an alternate ledger.
%once it has access to the private keys of honest participants.
As described in ~\S\ref{s:intro},
in \PoS/\BFT{} protocols, it is computationally inexpensive for nodes to add a new block to the ledger.
Thus, an adversary needs access to the private keys of the honest nodes, which it can do either through stealing or bribing.
Once an adversary has access to these keys, it can use them to fork the original ledger at a specific block number or height
and create an adversarial ledger with alternate blocks
(orthogonal, but in the past, blockchains have observed forks due to malicious attacks~\cite{dao-ethereum-split}).
As nodes of decentralized systems frequently leave/join the system~\cite{ethereum-exit,bitcoin-entry},
the adversary can use this opportunity to present its adversarial ledger as the authentic ledger
to a new node (or client), which unfortunately cannot distinguish between the two.
We illustrate this through the following example.

\begin{example} \label{ex:long-range}
	Assume that a decentralized system $\Service$ has the following \PoS{} blockchain ledger: $\block_1, \block_2, ... \block_k, ... \block_n$.
	Say malicious nodes get access to the private keys of all the honest nodes and
	decide to create an adversarial ledger, starting from the $k$-th block.
	Once it is the turn of malicious nodes to propose new blocks, they reveal the following adversarial ledger:
	$\block_1, \block_2, ... \block_k', ... , \block_n', \block_{n+1}'$.
	Any new node joining $\Service$ cannot distinguish between these two ledgers
	and will choose the longest chain. Similarly, some existing honest nodes, if bribed, may decide to forfeit
	their ledger and switch to the malicious ledger.
	Moreover, as time passes, with old nodes leaving the system and new nodes unable to distinguish,
	a hard-working adversary may be able to affirm the adversarial ledger as the original ledger.
\end{example}

In practice, there are a lot of examples where an adversary has successfully stolen the private keys of honest parties~\cite{cointelegraph-keyattack,bein-keyattack,private-key-breaches}.
We agree that stealing so many keys, primarily when the nodes are distributed is hard.
Hence, a rational attack is where the adversary bribes the honest validators who no longer have a stake in the system.
As these validators have nothing to lose, Tragedy of the Commons~\cite{proof-of-activity} suggests that these validators will sell
their private keys in return for some incentive.

A naive solution to this problem is to fix the set of nodes.
In that case, even if the adversary has access to the private keys of these nodes,
it cannot convince the honest nodes to switch to the adversarial ledger as, locally, each of them has a copy of the ledger.
Unfortunately, it is hard to prevent old nodes from leaving the system.
New nodes will eventually fill those spots, and the adversary needs to target only these new nodes.
%If it can make it impossible for them to distinguish between the two ledgers, which it can do with access to the private keys of old nodes,
%the adversary can forge the identities to old nodes.

The challenges make us conclude that any solution for long-range attacks should:
(1) not rely on the long-term safe-keeping of private keys, and
(2) be computationally expensive for an attacker to rewrite the ledger.

\subsection{Proof-of-Work Consensus}
\label{ss:pow}
%A popular solution to prevent long-range attacks is to append the blocks committed by a \PoS/BFT{} protocol to 
%a system running \PoW{} consensus protocol~\cite{tas2022babylon,azouvi2022pikachu}.
%To explain why this solution works, we illustrate the \PoW{} consensus next.
We briefly look at the design of \PoW{} consensus protocol, which can help in preventing long-range attacks.

In the \PoW{} protocol, each miner $\Miner \in \Miners$ selects some client transactions
from the available pool of transactions and packs them in a block $\block$.
This block $\block$ also includes a header, which contains:
(i) hash of the previous block ($\prev$),
(ii) the digest of all transactions or {\em Merkle root} $M_\block$,
(iii) difficulty $D$, and %, which determines the difficulty of the puzzle
(iv)  the nonce $\Nonce$, among other fields~\cite{smartpool,bitcoin-developer-reference}. %bc-processing the solution of the puzzle, 
As each miner decides which transactions to include in its block,
two miners may mine blocks with different transactions that {\em extend} the same previous block (with the hash $\prev$).  
Computing $M_\block$ of all transactions in the block requires a miner $\Miner$ to compute a
pairwise hash from leaves to the root.
The difficulty $D$, also termed as the difficulty of finding the nonce, informs the miner of the range of {\em desired hash}~\cite{garay2024bitcoin}.
Specifically, each miner continuously selects a random nonce $\Nonce$ till it satisfies the following equation:
\begin{equation}\label{eq:difficulty}
	\Hash{\prev ~||~ M_\block ~||~ \Nonce} ~<~ D
\end{equation}
When $\Miner{}$ discovers a {\em valid} nonce, it adds it to its block and broadcasts this block to all the miners.
When another miner $\Miner{'}$ receives a block with a valid nonce that extends the last block added to the ledger (with hash $\prev$), 
$\Miner{'}$ adds the received block to its ledger and starts building/mining the next block that extends the received block.
Note: once $\Miner{'}$ has added a block to the ledger, if in the future, $\Miner{'}$ receives any other block that includes $\prev$, it ignores that block.
Consequently, the miner who discovers the nonce earliest has the highest probability of adding a new block to the ledger as 
its block can reach a majority of miners the earliest.
Clearly, \PoW{} miners compete with each other in an attempt to find a valid nonce; 
\PoW{} consensus faces the following two challenges (among many others):
(1) All but one miner waste their computational resources and only the winner receives an incentive for finding the nonce.
(2) More than one miner can find a valid nonce, which can temporarily fork the ledger.
%th two miners may start mining subsequent blocks that extend different previous blocks.
As there is no longer one ledger and instead multiple forks, 
\PoW{} protocols define a mechanism to trim all but one fork, leading to further wastage of computational resources.

One solution to reduce the probability of forks is to increase the hardness/difficulty of finding the nonce;
decentralized systems dynamically update the difficulty $D$ to fix the rate miners add new blocks to the ledger.
These systems want to ensure that miners spend at least a fixed amount of time searching for the valid nonce.
The value of $D$ is a system parameter and $D$ increases if miners are producing blocks at a faster rate than expected or
the probability of forks is high and decreases vice versa.

%When a \PoW{} protocol is used with a \PoS/BFT{} system, 
%although all the \PoW{} miners are finding a nonce for the same block, they are competing with each other to find the nonce earliest.
%For an adversary to overwrite a ledger formed by the \PoW{} system, 
%it needs large enough computational power to produce old and new blocks faster than other miners.
%Despite this, the \PoW{} protocol is extremely disliked as it wastes energy (utilizes resources of only one miner) 
%and rewards only one miner.

\fulltext{
\subsection{Centralized Pooled Mining}
Several decentralized systems, like Bitcoin, allow miners to work in groups to reduce the cost incurred by miners during \PoW{} consensus;
miners pool together their resources to increase their chances of finding a valid nonce~\cite{smartpool,p2pool}.
Arguably, almost all the active mining pools today are centralized;
they are run by an organization that manages the pool's functioning.

The pool controller creates the block for the pool miners to mine and determines a set of lower-difficulty sub-problems;
 assume that the expected difficulty for adding a block to the ledger is $D$, then a miner may need to find a nonce at difficulty $d << D$~\cite{garay2024bitcoin}.
If a miner finds a valid nonce for a sub-problem, it submits that nonce to the controller.
If a nonce leads to a hash at difficulty $D$, the controller forwards this block to the miners outside the pool and
distributes the rewards proportional to the miners who discovered any valid nonce after deducting a management fee.

Mining pools ensure that each miner receives a regular payout (incentive) even if that individual miner cannot discover the
nonce that reaches difficulty $D$.
However, by design, these mining pools sacrifice decentralization for centralized management.
The pool controller receives a fee for managing the pool and
decides the rewards and punishments for the miners, which miners can join the pool, and who to remove from the pool.
Moreover, the existence of mining pools does not eliminate the nature of \PoW{}, 
as often there is more than one mining pool, and these pools compete with each other, which wastes computational resources.

Alternatively, decentralized pools eliminate the need for a pool controller.
However, attempts to create a decentralized mining pool have been unsuccessful due to nodes not doing designated tasks and lack of accountability:
the last block added by any decentralized mining pool in Bitcoin was in 2019~\cite{poolparty,smartpool}.
}
%
%This design of centralized and decentralized pools faces following two challenges: 
%(1) It rewards only one miner for solving a sub-problem; the remaining miners working on that sub-problem waste their resources.
%(2) If a pool miner finds a valid nonce, while solving at a lower difficulty, it can decide to avoid broadcasting the solution to 
%other pool members (pool controller) to keep all rewards for itself.
%
%Additionally, decentralized pools also face the following dilemma:
%At a low difficulty, too many nonce exchanges take place, which wastes resources.
%Setting a high difficulty increases the variance in miner rewards~\cite{smartpool,p2pool}.
%Consequently, in 2019, both Bitcoin and Ethereum abandoned decentralized pools~\cite{poolparty}.

\section{Power-of-Collaboration}
\label{s:dual}
\PoC{} aims to guard a decentralized system running \PoS/\BFT{} protocols from long-range attacks. 
It offers the following properties:

\begin{enumerate}
    \renewcommand{\labelenumi}{G\arabic{enumi}.}
    \item \label{g:ledger} 
    %{\bf Computationally expensive ledger re-writing.}
	\PoC{} makes it computationally expensive (solve complex puzzles) for an adversary to overwrite the original ledger,

    \item \label{g:cost} %{\bf Reduced wastage of computational resources.}
	\PoC{} requires miners to collaborate and work on the same block; each miner has to work on a subset of search space.
	Consequently, miners spend less resources than \PoW{}.

    \item \label{g:fair} %{\bf Fairness.}
	\PoC{} ensures fairness by distributing incentives among all the miners; 
	even if there is no valid nonce in a miner's search space, 
	it receives an incentive for its efforts.

    \item \label{g:accountability} %{\bf Accountability.}
	\PoC{} penalizes any miner that fails to find a valid nonce, if present, in its search space.
\end{enumerate}

Before we describe the design of \PoC{}, we discuss 
some of the possible solutions and their limitations.

{\em Version 1.}
Say a decentralized system running a \PoS/\BFT{} consensus protocol employs a \PoW{} consensus subsystem 
to guard itself against long-range attacks.
Each batch of transactions that the consensus protocol commits is forwarded to the \PoW{} subsystem to add to 
the ledger maintained by \PoW{} miners.
Each miner $\Miner{}$ follows the \PoW{} protocol: 
creates a \PoW{} block that includes one or more committed batches, a transaction that transfers incentive to its account, 
the hash of the previous block $\prev$, and initiates the search for a valid nonce. 
When $\Miner{}$ finds the nonce, it broadcasts its block and the nonce to the other miners.
If another miner $\Miner{'}$ receives a block/nonce, it starts building/mining the next block and includes the hash of the received block/nonce as the previous block.
This solution faces the following two limitations:
(i) all but one miner waste their computational resources (lack of fairness), and
(ii) more than one miner can find a valid nonce, which can temporarily fork the ledger and 
lead to a subsequent increase in the hardness/difficulty of finding the nonce (\S\ref{ss:pow}).

%{\em Version 2.}
%Let us now assume that the decentralized system still employs a \PoW{} consensus subsystem, but all the miners mine the same block; 
%for mined blocks to be identical across the miners, no miner includes its transaction that transfers incentive to its account.
%This solution inherits all the limitations of Version 1. 
%Further, this solution needs a mechanism to identify the miner who found the nonce; otherwise malicious replicas
%can claim that they found the nonce and may unfairly gain the incentive.

{\em Version 2.}
Next, we replace the \PoW{} consensus subsystem with a centralized mining pool, where the pool operator receives the next 
committed block and creates sub-problems for the pool's miners to mine.
This solution does not face any of the above limitations. 
However, this solution illustrates control by a single operator/organization, which receives fees for its services and decides the incentives/penalties for the pool’s miners.

{\em Version 3.}
Finally, we replace the centralized mining pool with a decentralized mining pool, where miners decide to coordinate with each other 
without any operator. 
The following are the challenges for any decentralized mining pool-based solution
(i) which miner decides the content of the block, 
(ii) how to fairly distribute rewards among the miners, and
(iii) how to detect and penalize a malicious miner that  delays block mining by not performing designated tasks

\begin{figure}[t]
    \centering
    \includegraphics[width=0.4\textwidth]{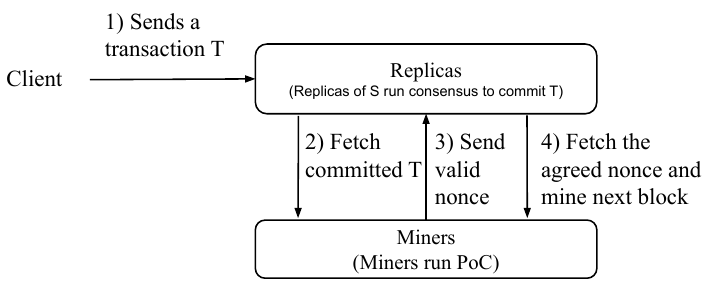}
    \caption{Transactional flow in a system $\Service$+\PoC.}
    \label{fig:poc_overview}
\end{figure}

{\bf Overview.}
Our solution should offer the four appealing properties (G\ref{g:ledger} to G\ref{g:accountability}).  
Consequently, we design \PoC{} that requires no centralized organization and guards a \PoS/\BFT{} protocol 
from long-range attacks.
In Figure~\ref{fig:poc_overview}, we illustrate the transactional flow.

%{\em (1) Transaction ordering and communication.}
(1) \PoC{} expects that the underlying \PoS/\BFT{} protocol reaches consensus on client transactions among its replicas 
and forwards every committed batch of transactions to the \PoC{} miners.

%{\em (2) Block creation and mining.}
(2) Once miners are ready to mine, they select a set of ordered batches to form a block. 
To allow miners to collaborate and reduce computational resource wastage, \PoC{} ensures that all the miners are mining identical blocks, and
each miner searches for the valid nonce on a unique subset of the search space (property~G\ref{g:cost}).
\Modify{It also shows that an adversary must provide more resouce than half of the miners in order to create an alternative block 
as all the miners solve the nonce collaboratively. (property~G\ref{g:ledger})}

%{\em (3) Nonce discovery and attestation.}
(3) Once a miner discovers a nonce, it broadcasts the nonce to everyone. 
To ensure that there are no forks of the ledger, \PoC{} requires the underlying \PoS/\BFT{} consensus protocol 
to attest a nonce. 
Note: this recursive dependency helps to {\em quickly select} a valid nonce; %without requiring multiple rounds of communication among the miners; 
the same task {\em can be done} by running a consensus on the nonce among the miners.

%{\em (4) Reward distribution.}
(4) Once a miner finds a valid nonce, each miner receives an incentive,
which ensures fair reward distribution (property~G\ref{g:fair});

%{\em (5) Failure detection and progress.}
(5) Byzantine miners may not search for nonce in their subset of the search space. 
If the valid nonce is present in this subset, then no miner will ever discover it. 
\PoC{} allows miners to independently discover such malicious attacks and switches honest miners to different subsets to facilitate the discovery of the nonce.

%{\em (6) Penalty.}
(6) \PoC{} guarantees accountability by penalizing malicious miners that failed to find a valid nonce (property~G\ref{g:accountability}).

%Each client sends its transaction to the replicas participating
%in a system $\Service$, which can run any \PoS/\BFT{} protocol
%to reach consensus on the transactions. Once the replicas commit
%a batch of client transactions, they gossip this batch to the \PoC{} miners.
%Following this, the miners collaboratively run the \PoC{} protocol
%to add this batch to the ledger. This requires finding a valid nonce.
%Whichever miner finds a nonce forwards that nonce to the replicas.
%\PoC{} leverages the consensus protocol run by system $\Service$ to select and attest a valid nonce.
%The selected nonce acts as a transaction for the next batch, thereby inducing no overhead on the replicas of $\Service$.

Next, we discuss our \PoC{} assuming no attacks ({\em good case}).
Later, we explain how we handle malicious attacks.

\subsection{Client Transaction Ordering}
Prior to running \PoC{}, we expect the blockchain system $\Service$ to reach
consensus on client transactions. 
For \PoC{}, the consensus run by $\Service$ is a black box. 
The system $\Service$ is free to run any consensus protocol of
its choice. 
It needs to only provide \PoC{} miners with {\em committed} \Modify{blocks, a batch of transactions.}

Our use of the term committed implies that each committed \Modify{block}
has been accepted by a quorum of replicas of $\Service$ and will persist across
adversarial failures. For instance, in \BFT{} protocols like \pbft{}~\cite{pbftj} and
\hotstuff~\cite{hotstuff}, blocks are committed when they have quorum
certificates from $2\f{\Replicas}+1$ replicas. In several other blockchain systems,
\Modify{a block} is assumed committed if it is at a specific depth in the blockchain.
A depth indicates the position of the block in the blockchain; the larger the depth of a
block $\block$, the greater the number of blocks that succeed $\block$ and the harder
it is for another fork to overtake this chain.

The assumption of \PoC{} miners working with only committed batches has two advantages:
(1) It frees \PoC{} from having any knowledge on the consensus run by $\Service$, and
(2) \PoC{} miners will not waste their resources on batches that may not persist.

\subsection{Chain Communication}
\label{ss:ida}
Once the blockchain system $\Service$ has committed a block, we require it to forward the
committed block to \PoC{} miners to log it in the ledger.
Like popular blockchain systems, Bitcoin and Ethereum, we employ the {\em gossip} protocol
for broadcasting a block.
Here, we are making a simplifying assumption that each node is connected to a sufficient number of honest nodes because,
in gossip protocols, each node forwards the message to only its neighbors.
Alternatively, the replicas of $\Service$ can employ either the Information Dispersal Algorithm~\cite{rabinida}
or Byzantine Reliable Broadcast~\cite{good-byzantine-broadcast} if they cannot assume a uniform distribution of honest nodes.
Each miner accepts a committed block once it receives it from $\f{\Replicas{}}+1$ replicas in $\Service$,
which assures this miner that it did receive a committed block.

\subsection{Collaborative Mining}
\label{ss:collaboration}
\PoC{} introduces the notion of collaborative mining to log each committed block in the ledger.
%such that rewriting the ledger is computationally expensive for an adversary.
%As the name suggests, collaborative mining requires miners to work together to solve a computationally expensive puzzle. 

Collaborative mining, like centralized pooled mining, should ensure that each miner works on a unique sub-problem so that miners 
do not waste their computational resources. 
Thus, we divide the \PoW{} hash computation into $\n{\Miners}$ disjoint
sub-problems and require each miner to work on a {\em distinct predetermined sub-problem}.
Like existing \PoW{} systems~\cite{bitcoin}, %we compute the solution space of hash computation for the miners~\cite{bitcoin}.
\PoC{} miners have to compute a SHA-256 hash, which is represented as a $32$-byte hexadecimal value.
Thus, the solution space $\Slice{}$ comprises of $2^{256}$ possible values.
We divide $\Slice{}$ into $\TotalSlices$ {\em slices}; the size of each slice is a system parameter.
%each slice consists of $256$ hexadecimal values, so $\TotalSlices = 16^{62}$ slices.%
%\footnote{
%    We chose the slice size of $256$ as it represents a single hexadecimal byte.
%    One can select any other arbitrary value.
%}
%
Given $\Slice{1}, \Slice{2}, ... , \Slice{\TotalSlices}$ slices, the following holds:
\begin{center}
    $\Slice{1} \cap \Slice{2} \cap \cdots \cap \Slice{\TotalSlices} = \varnothing$\quad and \quad $\Slice{1} \cup \Slice{2} \cup \cdots \cup \Slice{\TotalSlices} = \Slice{}$
\end{center}
\PoC{} assigns each miner one or more consecutive slices based on its stakes.
We assume that each slice is assigned to a miner in $\Miners$.
%\PoC{} deterministically assigns slice $\Slice{1}$ to miner $\Miner_{1}$, $\Slice{2}$
%to $\Miner_{2}$, and $\Slice{i}$ to $\Miner_{i}$, $i \in [1,\n{\Miners}]$.
We discuss the slice assignment scheme in more detail in \S~\ref{ss:staking}.
Given the difficulty $D$, each miner computes a hash till it reaches the target $32$-bit SHA-256 hash
(refer to Equation~\ref{eq:difficulty}).
This requires each miner to find a nonce $\Nonce$ in its set of slices.

Next, we describe the \PoC{} consensus.

%Due to this collaboration, if a \PoW{} miner takes time $\tau$ to
%find a valid nonce on the solution space $\Slice{}$, then in \PoC{}, if all the miners are
%honest, the time required to find the nonce is $\BigO{\dfrac{\tau}{\n{\Miners{}}}}$.
%Consequently, \PoC{} leads to reduced energy consumption by collaboratively finding solutions while
%eliminating the significant amount of redundant work that arises when parties compete to find
%the solution.

\subsection{\PoC{} Protocol Steps}
\label{ss:poc-protocol}
From an outside view, our \PoC{} protocol works in rounds, and within each round, each miner
attempts to find a valid nonce in its pre-determined slices.
%For the sake of exposition, in the following sections,
%we assume that the solution space $\Slice{}$ is deterministically divided into $\n{\Miners{}}$
%disjoint equal slices by each miner.
Next, we explain the \PoC{} protocol under the assumption that each miner knows
the next block to mine (\S\ref{ss:collaboration}) and has received this block through chain communication (\S~\ref{ss:ida}).
In the case a miner does not have access to the next block to mine,
it can ask the other miners about the missing blocks.

\begin{figure}[t]
    \begin{tikzpicture}[xscale=1.4,list/.style={minimum width=1.8cm,rectangle split, rectangle split parts=2,draw, rectangle split}]

        \node[list] (A) at (-1.2, 0) {\scriptsize \MName{hash}\nodepart{two}\scriptsize \MName{[$\Transaction_1,\Transaction_{100}$]}};
        \node[list] (B) at (0.3, 0) {\scriptsize \MName{hash}\nodepart{two}\scriptsize \MName{[$\Transaction_{101}, \Transaction_{200}$]}};
        \node[list] (C) at (1.8, 0) {\scriptsize \MName{hash}\nodepart{two}\scriptsize \MName{[$\Transaction_{201}, \Transaction_{300}$]}};
        \node[list] (D) at (3.3, 0) {\scriptsize \MName{hash}\nodepart{two}\scriptsize \MName{[$\Transaction_{301}, \Transaction_{400}$]}};
        \node[above] at (A.north) {\scriptsize \MName{Block $1$}};
        \node[above] at (B.north) {\scriptsize \MName{Block $2$}};
        \node[above] at (C.north) {\scriptsize \MName{Block $3$}};
        \node[above] at (D.north) {\scriptsize \MName{Block $4$}};
        \path (D.text west) edge[->] (C.text east)
        (C.text west) edge[->] (B.text east)
        (B.text west) edge[->] (A.text east)
        (A.text west);

        \draw[decoration={brace,amplitude=8pt,mirror},decorate,thick,blue!50!black!90] (-1.1, -0.6) -- (0.2,-0.6);

        \draw[decoration={brace,amplitude=8pt,mirror},decorate,thick,green!50!black!90] (2, -0.6) -- (3.3,-0.6);

        %\draw [->,thick, blue!50!black!90] (1,-1) to [out=270,in=180] (2.75,-1.6);
        %\draw [->,thick, green!50!black!90] (5,-1) to [out=270,in=180] (6.75,-1.6);

        \node[list] (E) at (-0.5, -1.9) {\scriptsize \MName{hash}\nodepart{two}\scriptsize \MName{[$\Transaction_{1}, \Transaction_{200}$]}};
        \node[list] (F) at (2.5, -1.9) {\scriptsize \MName{hash}\nodepart{two}\scriptsize \MName{[$\Transaction_{201}, \Transaction_{400}$]}};
        \node[above] at (E.north) {\scriptsize \MName{Blocks [$1$-$2$]}};
        \node[above] at (F.north) {\scriptsize \MName{Blocks [$3$-$4$]}};
        \path (F.text west) edge[->] (E.text east);

        \node[left] at (1.65, -0.8) {{\small $\Service$-blocks}};
        \node[left] at (1.6, -1.9) {{\small Mined blocks}};

    \end{tikzpicture}
    \caption{Illustrating how  $\Bsize = 2$ contiguous $\Service$-blocks produced by replicas of $\Service$
        are aggregated into one mined block of \PoC{}.
        Here, each $\Service$-block includes $100$ transactions.
    }
    \label{fig:block-batch}
\end{figure}

{\bf Block Creation.}
When a \PoC{} miner receives a block from $\f{\Replicas{}}+1$ replicas, it adds that block to the {\em list of pending blocks}.
The list of pending blocks is an ordered list of committed blocks that a miner is yet to add to the ledger; 
these blocks are ordered by the sequence number assigned by the \PoS/\BFT{} consensus protocol running at $\Service$.
As the difficulty $D$ of mining a block sets the rate at which blocks are added to the \PoC{} ledger,
which in turn impacts the system throughput and latency, at higher difficulties, \PoC{} has a lower throughput than the \PoS/\BFT{} consensus protocol. 
We present a discussion on \PoC's difficulty in~\S\ref{s:discussion}.
Consequently, \PoC{} miners have an ever-growing list of pending blocks, as the rate at which they add these blocks is slower than 
the rate at which they receive them from the \PoS/\BFT{} protocol.

To reduce this gap between total blocks received and blocks mined,
\PoC{} allows miners to batch a set of committed $\Service$-blocks, thereby,
each {\em mined block} includes $\Bsize > 0$ committed
$\Service$-blocks.\footnote{For disambiguation, we use $\Service$-blocks to denote
    the blocks produced by replicas of $\Service$ and mined block to denote the block produced by miners.}
The value of $\Bsize{}$ is a system parameter.
\PoC{} requires miners to only aggregate $\Bsize$ consecutive $\Service$-blocks from the pending list, 
which is essential for maintaining the block ordering by $\Service$. 
For the sake of discussion, let us assume that each $\Service$-block
is assigned a monotonically increasing sequence number $\Sequence$ and each mined block is
assigned a sequence number $\BlockHeight$. If the last block mined by miners had sequence
number $\BlockHeight - 1$ and the sequence number of last $\Service$-block added to the $(\BlockHeight-1)$-th block
is $\Sequence-1$, then in the $\BlockHeight$-th block, each miner will aggregate the following
blocks: $\Sequence, \Sequence+1,..., \Sequence+\Bsize$. %(Figure~\ref{alg:miners}, Line~\ref{alg:blocking}).
Aggregating $\Service$-blocks in this way is safe as these blocks are already
committed by replicas of $\Service$.
We illustrate this next.

\begin{example}
    In Figure~\ref{fig:block-batch},
    the replicas of $\Service$ committed four $\Service$-blocks starting with sequence number $1$.
    Assume $\Bsize = 2$, then each \PoC{} miner aggregates $2$ consecutive $\Service$-blocks into a mined block.
\end{example}

Each mined block includes a Merkle root of all the transactions;
as each $\Service$-block contains a Merkle root of all the transactions, 
a miners $\Miner{}$ generates the Merkle root for the mined block by hashing the Merkle roots of all the aggregated blocks.

{\bf Nonce Discovery.}
Once a miner knows the valid nonce for $(\BlockHeight-1)$-th block, it initiates the search for the nonce for $\BlockHeight$-th block.
Assuming the miner $\Miner_{i}$ knows the set of slices it needs to mine (\S\ref{ss:staking}), which we denote as $\Slice{i}$,
$\Miner_{i}$ initiates nonce discovery.
Specifically, $\Miner_{i}$ iterates over all the values in its slices $\Slice{i}$.

    {\bf Nonce Announcement.}
\label{ss:reliable_submition}
When a miner $\Miner_{i}$ finds a valid nonce $\Nonce$, it creates a message \MName{NonceFind}
that includes $\Nonce$ and broadcasts this message to all the miners. %(Figure~\ref{alg:miners}, Line~\ref{alg:if-nonce-found}).
When a miner $\Miner_{j}$ receives a \MName{NonceFind} message, it terminates
the process of nonce discovery if the received nonce is valid. Next, each miner that has access
to a valid nonce gossips this nonce to the replicas of $\Service$. %(Figure~\ref{alg:miners},Lines~\ref{alg:nonce-recv}-\ref{alg:nonce-pbft}).

    {\bf Nonce Attestation.}
Next, \PoC{} leverages the underlying \PoS/\BFT{} consensus protocol
to attest the discovered nonce. Specifically, when a replica $\Replica{}$ of $\Service$ receives matching
$\MName{NonceFind}$ messages from $\f{\Miners}+1$ miners, it creates a transaction that includes
the received \MName{NonceFind} message as its data. Whenever it is the turn of $\Replica{}$ to
propose a new block, and if an $\Service$-block containing the $\MName{NonceFind}$ message for the
$\BlockHeight$-th mined block is yet to be proposed, $\Replica{}$ proposes a new block that
includes this message. We expect honest replicas to prioritize nonce transactions over others
to prevent delays in adding newly mined blocks to the ledger.

    {\bf Chain Append.}
\label{ss:chain_append}
When a miner $\Miner_{i}$ receives the valid nonce for the $\BlockHeight$-th block from replicas
of $\Service$, $\Miner_{i}$ appends this block to its local ledger and marks the nonce
discovery process for the $\BlockHeight$-th block as complete. %(Figure~\ref{alg:miners}, Lines~\ref{alg:nonce-exists}-\ref{alg:nonce-terminate}). 
Post this, $\Miner_i$ executes the {\em reward} transactions in the block to distribute the reward (\S\ref{ss:rewards}).
Finally, $\Miner_i$ begins mining the next block.
    %(Figure~\ref{alg:miners}, Lines~\ref{alg:exe-txn}-\ref{alg:add-rewards})
We use the following example to illustrate \PoC{} mining.
\begin{example}
    In Figure~\ref{fig:poc}, the solution space $\Slice{} = [0,5]$ is divided among three miners.
    The three slices are: $\Slice{1} = [0,1]$, $\Slice{2} = [2,3]$, and $\Slice{3} = [4,5]$.
    Assume the valid nonce is $2$ and it lies in the slice of miner $\Miner_2$.
    Once $\Miner_2$ discovers the nonce in its slice, it broadcasts the nonce to all the miners,
    following which each miner requests the replicas of $\Service$ to attest this nonce.
\end{example}
%Designing optimal slice distribution schemes is an interesting research avenue, which
%we consider outside the scope of this work.

\begin{figure}
    \centering
    \begin{tikzpicture}[yscale=0.3,xscale=0.75]
        \draw[thick,draw=black!75]
        %(1.75,   0) edge ++(10, 0)
        (1.75,   1) edge ++(10, 0)
        (1.75,   2) edge [blue!50] ++(10, 0)
        (1.75,   3) edge ++(10, 0)
        (1.75,   4) edge [yellow!50!black!30]++(10, 0);

        \node[label,below,yshift=3pt] at (2.7, 1) {\scriptsize {Block}};
        \node[label,below,yshift=-4pt] at (2.7, 1) {\scriptsize {Creation}};
        \node[label,below,yshift=3pt] at (4.3, 1) {\scriptsize {Nonce}};
        \node[label,below,yshift=-4pt] at (4.3, 1) {\scriptsize {Discovery}};
        \node[label,below,yshift=3pt] at (5.9, 1) {\scriptsize {Nonce}};
        \node[label,below,yshift=-4pt] at (5.9, 1) {\scriptsize {Announce}};
        \node[label,below,yshift=3pt] at (7.3, 1) {\scriptsize {Nonce}};
        \node[label,below,yshift=-4pt] at (7.3, 1) {\scriptsize {Attest}};
        \node[label,below,yshift=3pt] at (8.8, 1) {\scriptsize {Consensus}};
        \node[label,below,yshift=-4pt] at (8.8, 1) {\scriptsize {in $\Service$}};
        \node[label,below,yshift=3pt] at (10.3, 1) {\scriptsize {Nonce}};
        \node[label,below,yshift=-4pt] at (10.3, 1) {\scriptsize {Attested}};

        \draw[thin,draw=black!75]   (2, 1) edge ++(0, 3)
        (3.5, 1) edge ++(0, 3)
        (5, 1) edge ++(0, 3)
        (6.5, 1) edge ++(0, 3)
        (11, 1) edge ++(0, 3);
        \draw[thin,draw=black!75,dashed](8, 1) edge ++(0, 3);
        \draw[thin,draw=black!75,dashed](9.5, 1) edge ++(0, 3);

        \node at (4.3, 2.5) {
            \renewcommand{\arraystretch}{0.7}
            \begin{tabular}{c}
                {\scriptsize[0, 1]} \\
                {\scriptsize[2, 3]} \\
                {\scriptsize[4, 5]} \\
                % {\scriptsize[6, 7]} \\
            \end{tabular}
        };

        % \node[left] at (1.8, 0) {\scriptsize $\Miner_4$};
        \node[left] at (1.8, 1) {\scriptsize $\Miner_3$};
        \node[left] at (1.8, 2) {\scriptsize $\Miner_2$};
        \node[left] at (1.8, 3) {\scriptsize $\Miner_1$};
        \node[left] at (1.8, 4) {\scriptsize $\PBFT{}$};

        \path[->] (2, 4)
        edge (3.5, 3)
        edge (3.5, 2)
        edge (3.5, 1);
        % edge (4, 0);
        \path[->] (5, 2)
        edge (6.5, 3)
        edge (6.5, 1);
        \path[->] (6.5, 3)
        edge (8, 4);
        \path[->] (6.5, 2)
        edge (8, 4);
        \path[->] (6.5, 1)
        edge (8, 4);
        \path[->] (9.5, 4)
        edge (11, 3)
        edge (11, 2)
        edge (11, 1);
        % edge (12, 0);

    \end{tikzpicture}
    \caption{\PoC{} protocol with miners $\Miners{} = \{\Miner_1, \Miner_2, \Miner_3\}$.
    The solution space $\Slice{} = [0,5]$ is divided into three
    slices $([0,1], [2,3], [4,5])$. 
    Assume the valid nonce $2$ and miner $\Miner_2$ discovers it in its slice.
    }
    \label{fig:poc}
\end{figure}

    {\bf Multiple Nonces.}
In rare scenarios, two or more miners may find valid nonces that
help to reach the expected hash. Specifically, miners $\Miner_{i}$ and $\Miner_{j}$ may both
discover a valid nonce in their respective slices. This situation is {\em not unique} to \PoC{},
even complex \PoW{} computations can have multiple solutions.
We need to guarantee that all the miners select the same nonce, which
is trivial for \PoC{} as each nonce is attested by replicas of $\Service$.
If the next
proposer for an $\Service$-block receives two or more $\MName{NonceFind}$ messages with distinct
valid nonces, it selects one of them as the solution. When, eventually, this block becomes committed,
all the \PoC{} miners will have access to the same nonce for block $\BlockHeight$.

\subsection{Staking and Rewards}
\label{ss:rewards}
\label{ss:staking}
\PoC{}, like \PoW{}, rewards its miners with incentives for their participation in the mining process; 
miners expend their computational resources and would only do so if they make some profit.
However, unlike \PoW{}, \PoC{} wants to ensure fairness by rewarding each miner for collaboration even though the valid nonce was not in its slice. 
In \PoC{}, we reward each miner in proportion to the number of slices in its slice set.
As stated in Section~\ref{ss:collaboration}, in \PoC{}, there are a total of $\TotalSlices$ slices.
We assume the following: (1) Each of these $\TotalSlices$ slices is assigned to a unique miner.
(2) Each miner is assigned a set of consecutive slices.
%For simplicity, each miner is assigned a set of consecutive slices
We require a miner to invest its monetary resources in exchange for each slice it holds.
Like existing \PoS{} systems, we term this investment by a miner as {\em staking} as the miner
no longer has access to its invested monetary resources~\cite{ether,algorand,cardano}.
\Modify{If a miner could not finish the mining due to the excessive slice space, it will be  regarded as an adversary.}
\Modify{We will also disscuss the miner reconfiguration in ~\S\ref{s:discussion} to reassign the slices Whenever a new miner joins or an old miner leaves the system.}

The economics of converting actual currency into an online tradable commodity is a problem faced by every decentralized system, for which current literature includes several ad hoc solutions. 
%We view this as a non-trivial problem and recognize that it requires a rigorous economic analysis, which, unfortunately, 
Thus, its design is beyond the scope of this paper.
For simplicity, we assume that \PoC{} builds on top of some token $\Token$, where $\Token$ is the cost of purchasing a slice.
Each miner exchanges its currency for a set of $\Token$.
If a miner $\Miner_i$ wants $e$ slices in its set, $\Miner_i$ stakes $e \times \Token$ tokens.
This information is added to the first block of the \PoC{} ledger, which is often referred to as the genesis block.
Specifically, \PoC-genesis blocks stores the following information:
(1) total number of miners ($\n{\Miners{}}$),
(2) number of tokens staked by each miner,
(3) a public key for each miner, and
(4) slice to miner mapping.

During \PoC's collaborative mining, each miner refers to this genesis block to identify
the slices it is responsible for mining.
Once a miner receives a valid nonce from the replicas of $\Service$, it adds a reward to the
account of each miner.
Like existing systems~\cite{ether,bitcoin}, we assume that the reward is proportional to the fees paid by the clients;
each client pays a fee to add its transaction to the ledger and this fee is divided among the miners in proportion to their number of slices.
For example, if the client for transaction $\Transaction$ pays $\Diamond$ tokens and
a miner $\Miner_i$ has $e$ slices in its set, $\Miner_i$ receives $\frac{e \times \Diamond \times \Token}{\TotalSlices}$ tokens as a reward.
We maintain each miner's account as a key-value pair in a NoSQL database replicated across all miners/replicas;
the key field represents the public key (logged in the genesis block) of each miner, while the value field represents the token balance.

\section{Malicious Attacks}
\label{s:attacks}
Unlike a centralized mining pool, where the pool operator oversees the activity of all the miners and rewards/penalizes them for their actions, 
\PoC{} assumes a decentralized setup. %without a centralized operator.
Thus, \PoC{} needs to provide protection from malicious attacks and guarantee accountability
while ensuring that it makes it computationally expensive for the adversary to overwrite the ledger using long-range attacks.

    {\bf First}, we expect that the underlying \PoS/\BFT{} protocol guarantees safety and liveness properties~\S\ref{s:discussion}. 
    Thus, it should be capable of handling attacks by a standard adversary.

    {\bf Second}, a standard adversary can only attack the \PoC{} in the following ways:
\begin{enumerate}[nosep]
    \renewcommand{\labelenumi}{A\arabic{enumi}.}
    \item  \label{a4}As multiple nonces can satisfy Equation~\ref{eq:difficulty} and
          if Byzantine miners are fortunate enough to find two such nonces for a mined block,
          they can equivocate by sending each nonce to only a subset of honest miners and replicas.

    \item  \label{a1}A miner decides to not participate in the \PoC's collaborative mining or if a miner
          discovers a nonce in its slice, it decides to not send it to other miners.
\end{enumerate}

{\bf Third}, an advanced adversary that has access to the keys of any replica/miner can 
use these keys to forge any message.

    {\bf Final}, each distributed system
needs to also guard against denial-of-service attacks. To prevent these attacks,
we follow the best practices suggested by prior works~\cite{aadvark,upright} and
assume that the replicas/miners use one-to-one virtual communication channels, which
can be disconnected if needed.

Next, we discuss how we handle attacks by a standard adversary on \PoC{} and
attacks by advanced adversaries on the entire system.

\begin{figure*}
    \centering
    \begin{tikzpicture}[yscale=0.3,xscale=0.75]
        \draw[thick,draw=black!75] (0.75,   0) edge ++(21.5, 0)
        (0.75,   1) edge ++(21.5, 0)
        (0.75,   2) edge [red] ++(21.5, 0)
        (0.75,   3) edge ++(21.5, 0)
        (0.75,   4) edge [yellow!50!black!30]++(21.5, 0);

        \node[label,below,yshift=3pt] at (1.5, 0) {\scriptsize {Block [$\block_{1}$]}};
        \node[label,below,yshift=3pt] at (3.2, 0) {\scriptsize {Nonce}};
        \node[label,below,yshift=-4pt] at (3.2, 0) {\scriptsize {Discovery}};
        \node[label,below,yshift=3pt] at (4.7, 0) {\scriptsize {Slice}};
        \node[label,below,yshift=-4pt] at (4.7, 0) {\scriptsize {Shift}};
        \node[label,below,yshift=3pt] at (6.25, 0) {\scriptsize {Shift}};
        \node[label,below,yshift=-4pt] at (6.25, 0) {\scriptsize {Certificate}};
        \node[label,below,yshift=3pt] at (7.74, 0) {\scriptsize {Consensus}};
        \node[label,below,yshift=-4pt] at (7.74, 0) {\scriptsize {in $\Service$}};
        \node[label,below,yshift=3pt] at (9.2, 0) {\scriptsize \MName{Shift}};
        %\node[label,below,yshift=3pt] at (10, 0) {\scriptsize \MName{Reliable}};
        %\node[label,below,yshift=-6pt] at (8, 0) {\scriptsize \MName{Agreement}};
        %\node[label,below,yshift=3pt] at (8, 0) {\scriptsize \MName{Slice}};
        %\node[label,below,yshift=-6pt] at (8, 0) {\scriptsize \MName{Shift}};
        \node[label,below,yshift=3pt] at (10.65, 0) {\scriptsize {Nonce}};
        \node[label,below,yshift=-4pt] at (10.65, 0) {\scriptsize {Discovery}};
        \node[label,below,yshift=3pt] at (12.3, 0) {\scriptsize {Nonce}};
        \node[label,below,yshift=-4pt] at (12.3, 0) {\scriptsize {Announce}};
        \node[label,below,yshift=-4pt] at (13.8, 0) {\scriptsize {Nonce}};
        \node[label,below,yshift=3pt] at (13.8, 0) {\scriptsize {Attest}};
        %\node[label,below,yshift=3pt] at (13.8, 0) {\scriptsize {\pbft{}}};
        \node[label,below,yshift=3pt] at (15.2, 0) {\scriptsize {Consensus}};
        \node[label,below,yshift=-4pt] at (15.2, 0) {\scriptsize {in $\Service$}};
        \node[label,below,yshift=3pt] at (16.7, 0) {\scriptsize {Nonce}};
        \node[label,below,yshift=-4pt] at (16.7, 0) {\scriptsize {Attested}};
        \node[label,below,yshift=3pt] at (18.35, 0) {\scriptsize {Penalty}};
        \node[label,below,yshift=3pt] at (19.9, 0) {\scriptsize {Penalty}};
        \node[label,below,yshift=-4pt] at (19.9, 0) {\scriptsize {Certificate}};
        \node[label,below,yshift=3pt] at (21.3, 0) {\scriptsize {Consensus}};
        \node[label,below,yshift=-4pt] at (21.3, 0) {\scriptsize {in $\Service$}};
        %\node[label,below,yshift=3pt] at (12.2, 0) {\scriptsize \MName{Reliable}};
        %\node[label,below,yshift=-6pt] at (12.2, 0) {\scriptsize \MName{Agreement}};
        %\node[label,below,yshift=-15pt] at (12.2, 0) {\scriptsize \MName{(Penalize $\Miner_{2}$)}};

        \draw[thin,draw=black!75]   (1, 0) edge ++(0, 4)
        (2.5, 0) edge ++(0, 4)
        (4, 0) edge ++(0, 4)
        (5.5, 0) edge ++(0, 4)
        (10, 0) edge ++(0, 4)
        (11.5, 0) edge ++(0, 4)
        (13, 0) edge ++(0, 4)
        (17.5, 0) edge ++(0, 4)
        (19, 0) edge ++(0, 4);

        \draw[thin,draw=black!75,dashed]
        (7, 0) edge ++(0, 4)
        (8.5, 0) edge ++(0, 4)
        (14.5, 0) edge ++(0, 4)
        (16, 0) edge ++(0, 4)
        (20.5, 0) edge ++(0, 4)
        (22, 0) edge ++(0, 4);

        % \node[left] at (0.8, 0) {\scriptsize $\Miner_4$};
        \node[left] at (0.8, 1) {\scriptsize $\Miner_3$};
        \node[left,red] at (0.8, 2) {\scriptsize $\Miner_2$};
        \node[left] at (0.8, 3) {\scriptsize $\Miner_1$};
        \node[left] at (0.8, 4) {\scriptsize $\PBFT{}$};

        \node at (3.2, 2.5) {
            \renewcommand{\arraystretch}{0.7}
            \begin{tabular}{c}
                {\scriptsize \MName[0, 1]} \\
                {\scriptsize \MName[2, 3]} \\
                {\scriptsize \MName[4, 5]} \\
                % {\scriptsize \MName[6, 7]} \\
            \end{tabular}
        };

        \path[->] (1, 4)
        edge (2.5, 3)
        edge (2.5, 2)
        edge (2.5, 1);
        % edge (2.5, 0);
        \path[->] (4, 3)
        edge (5.5, 3)
        edge (5.5, 2)
        edge (5.5, 1);
        % edge (5.5, 0);
        \path[->] (4, 1)
        edge (5.5, 3)
        edge (5.5, 2)
        edge (5.5, 1);
        % edge (5.5, 0);
        % \path[->] (4, 0)
        % edge (5.5, 3)
        % edge (5.5, 2)
        % edge (5.5, 1);
        % edge (5.5, 0);
        \path[->] (5.5, 3)
        edge (7, 4);
        \path[->] (5.5, 1)
        edge (7, 4);
        % \path[->] (5.5, 0)
        % edge (7, 4);
        \path[->] (8.5, 4)
        edge (10, 3)
        edge (10, 2)
        edge (10, 1);
        % edge (10, 0);
        \path[->] (11.5, 3)
        edge (13, 2)
        edge (13, 1);
        \path[->] (13, 3)
        edge (14.5, 4);
        \path[->] (13, 1)
        edge (14.5, 4);
        \path[->] (16, 4)
        edge (17.5, 3)
        edge (17.5, 2)
        edge (17.5, 1);
        \path[->] (17.5, 3)
        edge (19, 2)
        edge (19, 1);
        \path[->] (17.5, 1)
        edge (19, 3)
        edge (19, 2);
        % edge (17.5, 0);
        \path[->] (19, 3)
        edge (20.5, 4);
        \path[->] (19, 1)
        edge (20.5, 4);
        \node at (10.7, 2.5) {
            \renewcommand{\arraystretch}{0.7}
            \begin{tabular}{c}
                {\scriptsize \MName[2, 3]} \\
                {\scriptsize \MName[4, 5]} \\
                % {\scriptsize \MName[6, 7]} \\
                {\scriptsize \MName[0, 1]} \\
            \end{tabular}
        };
    \end{tikzpicture}
    \caption{Slice shifting procedure: assume $2$ is the valid nonce and is present in the slice
        of malicious miner $\Miner_2$. $\Miner_2$ fails to broadcast $2$, which triggers slice shifting;
        and with help of replicas of $\Service$, once $2$ is found, it is penalized.
    }
    \label{fig:poc_malicious}
\end{figure*}

\subsection{Standard Adversary}
As replicas of $\Service$ help in selecting a nonce, Attack~A\ref{a4} is trivially resolved.
Each replica $\Replica{}$ selects a nonce for the mined block at height $\BlockHeight$
only after it receives $\f{\Miners}+1$ matching \MName{NonceFind} messages.
Whenever it is the turn of $\Replica{}$ to propose a new block,
and if an $\Service$-block containing the nonce for the $\BlockHeight$-th mined block is yet to be proposed,
$\Replica{}$ proposes a new block that includes this nonce.
Thus, only one of the two nonces will commit and honest miners will receive only one nonce.
If there aren't sufficient matching messages, then no nonce will be selected and
this will lead to the eventual detection of some Byzantine miners.

To resolve Attack~A\ref{a1}, next, we present our {\em slice shifting} protocol,
which penalizes miners for malicious behavior.

\subsection{Slice Shifting}
\label{ss:slice-shifting}
Each honest \PoC{} miner should search for a nonce in its set of slices until it receives a
valid nonce from another miner or it has exhausted its slices. A malicious miner may not follow this
behavior; it may want to disrupt the process of nonce finding. If the malicious miners are fortunate
and the nonce is present in their slices, then honest miners may never receive the nonce, and \PoC{} will come to a halt. 
We aim to quickly resolve such a situation. 
We do so by running our {\em slice shifting} algorithm that switches
miner slices under failures. This is done with the aim that when an honest miner has access to the
slice held by a malicious miner, it can discover the nonce in that slice and broadcast it to other miners.

Next, we illustrate slice shifting protocol through an example.
Post this, we will explain the protocol in detail.

\begin{example}
    Figure~\ref{fig:poc_malicious} illustrates the slice shifting protocol. Assume the nonce lies in
    slice $[2,3]$ and miner $\Miner_2$ fails to announce the nonce. Eventually, honest miners
    timeout while waiting to receive a valid nonce and trigger slice shifting protocol. These miners must create a certificate to prove that
    a majority wants to do slice shifting. They send this certificate to the replicas of $\Service$
    for attestation. Post this, each miner works on a new slice. Once, they discover the nonce, they
    initiate the process of penalizing $\Miner_2$.
\end{example}

{\bf Timer Initialization.}
\PoC{} miners, who have exhausted their slice search space and do not have a valid nonce,
need a mechanism to make progress. We follow existing \BFT{} works and require each miner to set
a {\em timer} before it starts mining a slice. Specifically, each miner sets a timer $\delta$ for
the $\BlockHeight$-th block %(Figure~\ref{alg:miners}, Line~\ref{alg:start-timer})
and stops $\delta$ when it has a valid nonce for the $\BlockHeight$-th block.

    {\bf Malicious Miner.}
If $\Miner{}$'s timer $\delta$ expires and it does not have access to a valid nonce, $\Miner{}$
announces to all the other miners that it wishes to initiate the {\em slice shifting} protocol. %(Figure~\ref{alg:miners}, Line~\ref{alg:timer-expire}). 
The slice shifting protocol runs for at
most $\f{\Miners{}}$ rounds and deterministically switches slices assigned to each miner.
A round of slice shifting only takes place when at least $\f{\Miners{}}+1$ miners request to do so.
Specifically, when a miner $\Miner_{i}$ timeouts, it creates a message
$\Message{\MName{Shift}}{\BlockHeight, \ShiftRound}$ and broadcasts this message to all the other
miners. Here, $\ShiftRound$ represents the slice shifting round, which is initially set to
    {\em zero}. When $\Miner{}$ receives $\MName{Shift}$ messages from $\f{\Miners{}}+1$ distinct miners,
it requests the replicas of system $\Service$ to help reach a consensus on slice shifting. %(Figure~\ref{alg:miners}, Line~\ref{alg:shift-send}). 
To make this request, each miner must broadcast
a certificate $\Certificate$ that includes $\f{\Miners{}}+1$ $\MName{Shift}$ messages.

    {\bf Shift Attestation.}
When replicas of $\Service$ receive a signed $\Certificate$ from $\f{\Miners{}}+1$ \PoC{} miners,
they agree to attest this slice shift. This attestation requires the replicas to run consensus on this
certificate $\Certificate$; 
the next proposer for a $\Service$-block includes $\Certificate$ as a transaction in its block.
{\em Note:} consensus on $\Certificate$ is like consensus on any transaction where $\Certificate$ acts
as the transactional data. Post consensus, all the replicas
gossip this block to the miners.
Once a miner $\Miner{}$ receives a $\Service$-block from $\f{\Replicas}+1$ replicas that include
a committed certificate $\Certificate$, it assumes it is time to shift its slices. Following this,
it increments the shift round $\ShiftRound$ by one and mines the next slice. %(Figure~\ref{alg:miners}, Lines~\ref{alg:increase-round}-\ref{alg:find-nonce}).
If in shift round $\ShiftRound$, $\Miner_i$ was responsible for mining slices
$\{ \Slice{i}, \Slice{i+1}, ... , \Slice{j} \}$, in round $\ShiftRound+1$,
$\Miner_i$ will mine slices $\{ \Slice{i+1}, ... , \Slice{j}, \Slice{o} \}$,
where $o = (j+1) \mod \TotalSlices$, and $\TotalSlices$ is the total number of slices.
Again, before mining for round $\ShiftRound+1$, $\Miner_i$
restarts the timer $\delta$ for the $k$-th block.
It is possible that the timer $\delta$ again timeouts, due to more failures. In such a case, $\Miner_i$ would
need to initiate another round of slice shifting.
However, under a standard adversary and reliable network,
for each mined block, we need to run the slice shifting protocol only $\ShiftRound = \f{\Miners}$ times.

{\bf No nonce -- Merge.}
Although infrequent, \PoC{} miners may encounter cases where no nonce satisfies Equation~\ref{eq:difficulty}. 
This no nonce situation is scarce in \PoW{} because miners work on different blocks, but it is possible in \PoC{} because all the 
miners collaborate on the same block.
We resolve this situation as follows.

If even after $\ShiftRound = \f{\Miners{}}$ rounds honest miners do not have access to a valid nonce,
we require the miners to terminate their search for the nonce and initiate the {\em merge} process, which
requires miners to mine two or more consecutive mined blocks together,
with the hope that mining multiple blocks together increases the probability of finding a nonce.
For instance, for the $\BlockHeight$-th block, if a miner $\Miner{}$ receives a certificate $\Certificate$
from $\Service$ replicas, which has shift round $\ShiftRound = \f{\Miners{}}$, $\Miner{}$ concludes that
no valid nonce exists for the $\BlockHeight$-th block. %(Figure~\ref{alg:miners}, Line~\ref{alg:no-nonce-possible}).
Following this, $\Miner{}$ creates a new block that merges contents of the $\BlockHeight$-th and $(\BlockHeight+1)$-th blocks.
%(Figure~\ref{alg:miners}, Line~\ref{alg:block-merge}). 
This merged block now
serves as the $\BlockHeight$-th block and miners attempt to find the nonce for this block. The merged block
includes a Merkle root, which is the hash of all the transactions in $\BlockHeight$ and $(\BlockHeight+1)$-th blocks.

    {\bf Penalty for Slice Shifting.}
Frequent slice shifting due to malicious miners is detrimental to the performance of \PoC{};
it forces honest miners to do more work and wastes their resources. To discourage these attacks, we
{\em penalize} malicious miners. We require each miner to track the number of
shifts ($\ShiftRound$) it took to find a valid nonce and to identify the miners that failed to perform this task.

In \PoC{}, identifying malicious miners responsible for $\ShiftRound$ shifts is a trivial
task for honest miners.
When a miner $\Miner_{i}$ receives a valid nonce for the $\BlockHeight$-th block
in shift round $\ShiftRound$ ($1 \leq \ShiftRound \leq \f{\Miners{}}$), it identifies malicious miners
based on the slice containing the valid nonce.
Let $\Slice{j}$ be the slice, then $\Miner_{i}$ looks into the
genesis block to find the initial assignment of the slice.
Using this information, $\Miner_{i}$ determines the miners who held this slice in subsequent $\ShiftRound - 1$ rounds
and adds these miners to the set of malicious miners $\Miners_{mal}$.

To penalize these malicious miners, we again invoke the replicas of $\Service$. Once a
miner $\Miner{}$ has the knowledge of set $\Miners_{mal}$, it sends a
message $\Message{\MName{Penalty}}{\Miners_{mal}, \BlockHeight, \ShiftRound}$ to all the miners.
%(Figure~\ref{alg:miners}, Lines~\ref{alg:penalize-miners}-\ref{alg:penalty-bcast}).
Once $\Miner{}$ receives \MName{Penalty} messages from $\f{\Miners{}}+1$ miners,
it creates a certificate (like it created for slice shifting) that includes these \MName{Penalty}
messages and broadcasts this certificate to each replica in $\Service$.

When the replicas of $\Service$ receive a $\MName{Penalty}$ certificate signed by
$\f{\Miners{}}+1$ distinct miners, they add it to a new block for consensus.
Post consensus, honest miners penalize malicious miners by deducting their account balances.

\subsection{Advanced Adversary: Long-Range Attacks}
\label{ss:long-range}
An advanced adversary, unlike a standard adversary, has access to the private keys of honest replicas/miners.
It can use these keys to forge any message that has a digital signature.
In the context of our system, the following attacks are possible:
(1) Multiple committed $\Service$-blocks with the same sequence number,
(2) Forgery of nonce, shift, or penalty messages.
(3) Forgery of $\Service$-blocks, containing nonce, shift, or penalty transactions.
If an adversary can compromise the keys of honest replicas/miners, 
\PoC{} {\em can not guarantee} fairness and accountability; honest miners may get penalized.
However, \PoC{} guarantees that it is computationally expensive for the adversary to rewrite the existing ledger.

To illustrate that \PoC{} guards against long-range attacks, we theoretically analyze its hardness.
We follow Example~\ref{ex:long-range} where starting from the $k$-th block,
malicious parties want to rewrite the ledger.
To do so, malicious replicas would first create an alternate set of $\Service$-blocks
using the compromised private keys of honest parties.
Next, these malicious replicas would forward these $\Service$-blocks
to the malicious miners, who will attempt to forge the \PoC{} ledger by creating new mined blocks.
These malicious miners will not receive any help from honest miners as they have a local copy of the ledger.

As the combined computational power of honest miners is more than malicious miners,
it is safe to assume that in any round of \PoC{},
malicious miners control at most $50\%$ slices and have a $50\%$ chance of finding a valid nonce.%
\footnote{
Assuming that an adversary has access to the private keys of honest nodes is orthogonal to the assumption that an adversary also has more computational power than honest nodes. If an adversary controls more computational power, then it can rewrite even the \PoW{} ledger.
}
So, $\frac{1}{2}$ is the probability that malicious miners find the solution for an ``alternate block'' in their set of slices.
The probability that malicious miners find the solution for the $b$ consecutive alternate blocks in their set of slices is $(\frac{1}{2})^b$.
If $b \ge 7$, the probability is $0.7\%$.
Clearly, if Byzantine miners have to create a large number of alternate blocks, it is impossible for them
to find the nonces by just mining their own set of slices.
Alternatively, they can search for the nonce in all the slices (entire solution space)
but this at least doubles their work.
%refer to this probabilistic limit as the \textit{solution space probability bound}. 

Next, we measure the actual time required to create an alternate
chain by taking into account two parameters: (1) the age of the chain $\alpha$ (in months)
and (2) the hashing power ratio of the malicious miners ($m$) over the honest miners ($h$), $\frac{m}{h}$.
Essentially, it takes $\alpha \times \frac{h}{m}$ months for the malicious miners to reconstruct
an $\alpha$-month-old chain given their overall hashing power ratio of $\frac{m}{h}$. For example,
if the malicious miners hold $\frac{1}{2}$ of the mining power and want to reconstruct a 1-year
old chain, it would require two years' worth of computation to create an alternative chain of equal
length.

During these two years, the following two things can happen:
(1) The original chain will continue growing, which further increases the task of malicious miners.
(2) The system is at a stall and honest miners/replicas detect that nothing is getting appended to the ledger and will
leave the system.
%
%Of course, if the malicious miners have access to more resources than their honest counterparts, 
%then they can mine at a greater hashing power, which will allow them to re-write the ledger at a faster rate.
%But, this argument holds for even \PoW{} blockchains, which can be re-written with miners 
%having more resources than honest miners.
These arguments help us to demonstrate that simply compromising the
private keys of honest miners/replicas is insufficient to launch a long-range attack on \PoC{}.

{\bf Proofs.}
For detailed analysis and proof of correctness, please refer to ~\S\ref{sec:proofs}.

\section{Discussion}
\label{s:discussion}

{\bf Difficulty.}
\label{ss:difficulty}
As discussed in \S~\ref{ss:pow}, difficulty refers to the hardness of finding a valid nonce. 
%it is the rate at which miners produce new blocks.
For \PoW{}, its difficulty depends on the following two parameters:
(1) the number of miners and the hardware technology available to the miners, and
(2) the probability of ledger forks.
In \PoC{}, replicas work on the same block, under the standard adversary model, so there is no possibility of forks.
Consequently, we need to increase/decrease \PoC's difficulty based on the number of miners and the characteristics of the latest hardware technology.

{\bf Miner Reconfiguration.}
\label{ss:reconfiguration}
\PoC{} assumes that reconfigurations are possible; we allow old miners to leave and new miners to join the system.
Like most \PoS/\BFT{} systems~\cite{algorand-stake-withdrawl,ethereum-stake-withdrawl,hotstuff,diembft}, 
we assume that 
(1) each miner only leaves at the boundary of consensus; no miner leaves during an ongoing consensus.
(2) despite reconfigurations, less than $50\%$ of the total number of miners are malicious.
(3) reconfigurations take place under a reliable network.

If a miner $\Miner$ wants to join or leave \PoC{} mining, it creates a message
$\SignMessage{\Message{JoinMiner}{pk, \Token}}{\Miner}$ and $\SignMessage{\Message{LeaveMiner}{pk}}{\Miner}$, respectively, and broadcasts this message to
all the replicas of $\Service$. 
We use $pk$ to denote the public key of the miner, and $\Token$ to denote the monetary resources a new miner wants to stake.
$\Token$ also helps to calculate the total number of slices for the new miner.

The next proposer (or leader) on receiving a join/leave request, needs to do the following: 
%create a special $\Service$-block that includes the following:
(1) create a transaction that includes the join/leave message received from the miner,
(2) check if the difficulty of the system needs to be changed and if so, create a 
transaction that includes the updated difficulty of the system. 
(3) redistributes the slices among the miners and creates transactions that map each miner to its set of slices.
The proposer collects these transactions and proposes a special $\Service$-block.
Once replicas reach a consensus on this special block, they forward it to the miners.
When miners finish appending this $\Service$-block to the ledger, 
they create/delete accounts for the joining/leaving miner and update their set of slices.
Note: the miner who sent a $\MName{LeaveMiner}$ message needs to participate till the end of this step, 
and only then will its stake be released and it can leave the system..

{\bf Challenges for \PoC{}.}
Although \PoC{} helps to meet the four desirable properties~G\ref{g:ledger} to~G\ref{g:accountability}, it faces two challenges:

(1) {\em Fortunate miner.} 
As discussed earlier, a Byzantine miner $\Miner{}$ may not follow the \PoC{} protocol and skip searching for nonce in its set of slices. 
If $\Miner{}$ is fortunate and a valid nonce exists in the slice of an honest miner, 
then $\Miner{}$ will receive a reward without doing any work.
Due to the collaborative nature of \PoC{}, it is impossible to detect such a malicious miner.

(2) {\em Restricted access than \PoW{}.}
As described in ~\S\ref{ss:staking}, despite requiring miners to solve a puzzle like \PoW{}, 
\PoC{} requires miners to stake their resources and undergo a reconfiguration protocol before they can join or leave the \PoC{} mining.

\section{Security Arguments} \label{sec:proofs}
We now state and prove the security properties of \PoC{} under the standard adversary model.
%These properties hold under the standard \BFT{} assumption stated in .
We assume that \PoC{} is appended to a decentralized system $\Service$, which provides the
safety and liveness guarantees stated.
\PoC{} adopts the same partial synchrony model adopted in most consensus systems~\cite{pbftj,hotstuff}.
It guarantees {\em safety} in an asynchronous environment where messages can get lost, delayed or duplicated.
However, {\em liveness} is only guaranteed during the periods of synchrony~\cite{pbftj,sbft,hotstuff}.
\PoC{} guarantees {\em fairness} for rewards and penalties only under synchrony.

\textbf{Safety Argument.}
We argue the safety of \PoC{} by relying on the agreement property of the system $\Service$. Intuitively, \PoC{} ensures that every miner witnesses an identical sequence of discovered nonces. This coherence is achieved by sequencing each discovered nonce into the underlying \PoS/\BFT{} protocol. Consequently, irrespective of the number of solutions found to the Proof of Work (PoW) puzzle, all miners eventually converge to a unified nonce\footnote{We note that Ethereum~\cite{ether} uses a similar technique to ensure that all replicas observe the same set of signatures generated by its finality gadget~\cite{goldfish}.}.

\begin{theorem}[Safety] \label{th:safety}
	No two conflicting blocks are settled by the \PoC{} protocol. That is, if two honest miners $\Miner_i$ and $\Miner_j$
	add blocks $\block_i$ and $\block_j$ at sequence number $\BlockHeight$, then $\block_i = \block_j$.
\end{theorem}
\begin{proof}
	Assume that honest miners $\Miner_i$ and $\Miner_j$ added two conflicting blocks $\block_i$ and $\block_j$ at sequence number $\BlockHeight$.
	We know that each mined block has $\Bsize$ contiguous $\Service$-blocks (\S\ref{ss:poc-protocol}).
	So each mined block, including $\block_i$ and $\block_j$ have the same number of $\Service$-blocks.
	As both conflicting blocks $\block_i$ and $\block_j$ have the same sequence number $\BlockHeight$, it is safe to assume that mined blocks at
	sequence number $ \le \BlockHeight-1$ have the same set of $\Service$-blocks.
	This implies that $\block_i$ and $\block_j$ have at least one distinct $\Service$-block.
	Miners $\Miner_i$ and $\Miner_j$ must have received this distinct $\Service$-block from replicas of $\Service$.
	This can only happen if replicas of $\Service$ committed two distinct blocks at the same sequence number.
	But, this is a contradiction as we assume that $\Service$ guarantees a globally consistent view of the transactions.

	Similarly, each miner submits a $\MName{NonceFind}$ message to replicas in $\Service$ when it has access to a valid nonce.
	The replicas of $\Service$ add this message as a transaction in the next $\Service$-block.
	Assume that this $\MName{NonceFind}$ message is in blocks $\block_i$ and $\block_j$ and it is the transaction
	on which miners differ.
	If this is the case, then replicas of $\Service$ committed different nonce in their $\Service$-blocks,
	which contradicts our assumption on $\Service$.
\end{proof}

\textbf{Liveness Argument.}
We argue the liveness of \PoC{} during periods of synchrony.

\begin{lemma}[Commit Availability] \label{th:commit-availability}
	An honest miner eventually receives the $k$-th $\Service$-block committed by an honest replica.
\end{lemma}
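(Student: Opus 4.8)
The plan is to derive this availability statement entirely from the guarantees already assumed for $\Service{}$ together with the chain-communication rule of \PoC{}, without appealing to any internal property of the mining protocol. First I would unpack what ``committed by an honest replica'' buys us. By the safety guarantee of $\Service{}$ (\Cref{sec:preliminaries}), a $k$-th $\Service{}$-block that an honest replica has committed is final and persists across adversarial failures; and by the liveness/agreement guarantees of $\Service{}$ during synchrony, a block committed by one honest replica is eventually held by every honest replica. In certificate-based instantiations such as \pbft{} and \hotstuff{} one can say more via quorum intersection: the commit certificate can only be assembled if at least $\f{\Replicas}+1$ honest replicas hold and retain the block. Either way, the object we must deliver is held not by a single possibly-soon-faulty replica but by the honest majority of $\Service{}$.

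Next I would invoke the chain-communication rule of \S\ref{ss:ida}: every honest replica gossips each block it commits to its neighbouring miners. Hence each honest replica that holds the committed $k$-th $\Service{}$-block initiates gossip of it, and it remains to show this gossip reaches an arbitrary honest miner $\Miner{}$. For this I would rely on the two assumptions made explicit in \S\ref{ss:ida} and \Cref{sec:preliminaries}: (i) each node is connected to a sufficient number of honest nodes, so the honest sub-network is connected, and (ii) we are in a period of synchrony, so every message between honest nodes is delivered within a known bound $\Delta$. Under (i) there is a path of honest nodes from some honest replica holding the block to $\Miner{}$, and under (ii) each hop of this path completes within $\Delta$; a short induction on the length of the path then shows that $\Miner{}$ receives the block after finitely many hops.

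The main obstacle is ruling out an adversary that stalls delivery indefinitely. Malicious replicas may refuse to forward the block, and the standard adversary of \Cref{sec:preliminaries} may delay and reorder messages; I would argue that neither affects the conclusion. The refusal of faulty nodes to gossip is absorbed by assumption (i), which supplies an entirely honest dissemination path, while unbounded delay is excluded by the synchrony assumption (ii) under which this liveness-style lemma is stated. A secondary point worth recording is that $\Miner{}$ in fact receives matching copies from at least $\f{\Replicas}+1$ distinct honest replicas, which is exactly the trigger required at Line~\ref{alg:receive-parts} of Figure~\ref{alg:miners} and links this lemma to the subsequent mining steps. I expect no delicate calculation here; the entire content lies in correctly chaining the commit, gossip, and synchrony assumptions.
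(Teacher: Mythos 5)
Your proposal is correct and matches the paper's argument in substance: both reduce the claim to the fact that honest replicas of $\Service$ disseminate each committed block and that this dissemination reaches every honest miner. The paper phrases this as a (largely formal) induction over the block index $k$, with a genesis base case, and simply asserts that an honest replica ``reliably broadcasts'' the committed block; you unfold that assertion into the gossip rule of \S\ref{ss:ida}, honest-subnetwork connectivity, and synchrony, which is a more explicit justification of the same step rather than a different route.
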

\begin{proof}
	We argue this lemma by induction over the serialized communication of committed $\Service$-blocks.
	Assuming a history of $k-1$ committed $\Service$-blocks for
	which this property holds, we consider the $k$-th committed $\Service$-block.
	When an honest replica has a committed $\Service$-block, it reliably broadcasts that block to the miners.
	It is thus guaranteed that an honest miner will receive all the committed $\Service$-blocks.
	The inductive base case involves assuming that all replicas are initialized with a committed genesis ($k=1$) block, which we can ensure axiomatically.
\end{proof}

\begin{lemma}[Nonce Search] \label{th:nonce-search}
	The first time an honest miner $\Miner_i$ obtains the $\BlockHeight$-th block containing $\sigma > 0$ committed $\Service$-blocks,
	it searches for a valid nonce $\Nonce$ in slice $\Slice{i}$.
\end{lemma}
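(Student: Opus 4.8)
The plan is to trace the honest miner's execution through the \PoC{} miner protocol (\Cref{alg:miners}), invoking \Cref{th:commit-availability} to guarantee that $\Miner_i$ actually assembles the $\BlockHeight$-th block before reaching the nonce search. First I would appeal to Commit Availability: since $\Miner_i$ is honest and obtains the $\BlockHeight$-th block carrying $\Bsize > 0$ committed $\Service$-blocks, it must have received, in order, the $\Bsize$ contiguous committed $\Service$-blocks starting from $q+1$, where $q$ denotes the sequence number of the last $\Service$-block aggregated into the preceding mined block $\BlockHeight-1$. This is precisely the precondition detected by the receive event on Line~\ref{alg:receive-parts}, which hands control to \textbf{NewMine()}.

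Next I would follow the control flow of \textbf{NewMine()}. Upon receipt of the final ($\Bsize$-th) such $\Service$-block, the guard matching $q$ on Line~\ref{alg:blocking} and the subsequent check for $\Bsize$ committed blocks from $q+1$ are both satisfied, so $\Miner_i$ constructs the $\BlockHeight$-th block (Line~\ref{alg:block-create}). The crux of the argument is that this is the \emph{first} time $\Miner_i$ obtains this block, hence no timer has yet expired and no slice-shifting round has been triggered for block $\BlockHeight$. Consequently the routine sets the shift round $\ShiftRound = 0$ (Line~\ref{alg:start-timer}) and assigns $\Miner_i$ its initial slice $\Slice{i}$ (Line~\ref{alg:initial-slice})---exactly the deterministic slice fixed for index $i$ by the assignment scheme of \Cref{ss:staking}. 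Finally, NewMine() invokes \textbf{FindNonce}$(\Slice{i})$ (Line~\ref{alg:first-find-nonce}), which by its definition (Line~\ref{alg:find-nonce-fun}) makes $\Miner_i$ search slice $\Slice{i}$ for a valid nonce $\Nonce$; because $\Bsize > 0$ the assembled block is non-empty and carries a well-defined puzzle, so this search is substantive rather than vacuous.

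I expect the only delicate point to be cleanly justifying that ``the first time'' rules out any prior slice shift for block $\BlockHeight$. A shift can be reached only through a timer expiry (Line~\ref{alg:timer-expire}) followed by a committed shift certificate processed in \textbf{SliceCheck()} (Line~\ref{alg:slice-check-fun}), and neither of these can have fired before $\Miner_i$ has even formed the block and started its timer $\delta$. Once this is pinned down, the slice remains its initial value $\Slice{i}$ at shift round $0$, and the conclusion that $\Miner_i$ searches $\Slice{i}$ follows directly from the code, establishing the lemma.
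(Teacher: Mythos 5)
Your proposal is correct and follows essentially the same route as the paper's own proof: both trace the control flow of \Cref{alg:miners} from the block-reception event through \MName{NewMine} to the \MName{FindNonce}$(\Slice{i})$ call. The only difference is that you make explicit two points the paper leaves implicit---the appeal to Commit Availability and the argument that ``first time'' forces shift round $\ShiftRound = 0$ and hence the initial slice $\Slice{i}$---which is a harmless (indeed slightly more careful) elaboration rather than a different approach.
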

\begin{proof}
	Upon receiving the $k$-th committed $\Service$-block for the first time,
	correct miners
	%call the function $\MName{NewMine}$
	%either through function \MName{NonceCheck} (Line~\ref{alg:nonce-check} of \Cref{alg:miners}) or \MName{SliceCheck} (Line~\ref{alg:slice-check} of \Cref{alg:miners}).
	%
	%If neither of these calls succeed, then \MName{NewMine} (Line~\ref{alg:mine-new-block}) is directly called (Line~\ref{alg:mine-check}).
	wait to check if there are $\sigma > 0$ $\Service$-blocks
	available, since the last $\Service$-block added to the ledger, for mining.
	If so, $\Miner_i$ aggregates these $\Service$-blocks into mined block $b$ and starts to search for a nonce in its set of slices.
\end{proof}

%\begin{lemma}[Nonce Existence] \label{th:nonce-existence}
%    A correct miner searching for a nonce $\Nonce$ in slice $\Slice{i}$ finds $\Nonce$ with non-zero probability before timer $\delta$ expires.
%\end{lemma}
%\begin{proof}
%    Searching for a nonce in a slice is performed through an exhaustive search. As a result, if a miner tries all possible nonce it will eventually find a valid one. We conclude the proof by noting that since the miner can try at least one nonce before timer $\delta$ expires, it has a non-zero probability of finding a valid nonce before time $\delta$.
%\end{proof}

\begin{lemma}[Shift Liveness] \label{th-shift-liveness}
	If a correct miner does not find a valid nonce $\Nonce$ in slice $\Slice{i}$ to settle block $b$ within time $\delta$, another miner eventually tries it.
\end{lemma}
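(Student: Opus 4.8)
The plan is to trace the slice-shifting machinery of \Cref{alg:miners} from the instant $\Miner_i$'s timer fires until the slice $\Slice{i}$ is handed to a different miner, leaning on the liveness of $\Service$ during synchrony together with \Cref{th:commit-availability}. By hypothesis $\Miner_i$ is correct and obtains no valid nonce for block $\BlockHeight$ within $\delta$, so its timer expires and it broadcasts a $\MName{Shift}$ message (Line~\ref{alg:timer-expire}). The target conclusion --- that another miner eventually searches $\Slice{i}$ --- reduces to showing that one shift round completes, since the deterministic rotation $i \mapsto (i+1)\bmod\n{\Miners}$ (Line~\ref{alg:switch-slice}) then reassigns $\Slice{i}$ to the miner that previously held $\Slice{(i-1)\bmod\n{\Miners}}$, a miner distinct from $\Miner_i$.

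First I would establish that at least $\f{\Miners}+1$ correct miners broadcast $\MName{Shift}$ for block $\BlockHeight$ in this round. The crux is that an unsettled block forces every correct miner to time out: had any correct miner found the valid nonce in its current slice, it would have broadcast it (Line~\ref{alg:nonce-found}), every correct miner (including $\Miner_i$) would have received it and stopped searching, contradicting that $\Miner_i$'s timer expired while still seeking a nonce. Hence no correct miner finds the nonce, all of them exhaust $\delta$, and the $51\%$-honesty assumption ($\n{\Miners}-\f{\Miners}\ge\f{\Miners}+1$) yields at least $\f{\Miners}+1$ distinct $\MName{Shift}$ messages. Some correct miner therefore assembles the certificate $\Certificate$ and gossips it to $\Service$ (Lines~\ref{alg:shift-send}--\ref{alg:shift-pbft}).

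Next I would invoke the liveness of $\Service$: the certificate-carrying transaction is eventually committed in some $\Service$-block, and by \Cref{th:commit-availability} every correct miner eventually receives that committed block. Each then runs \MName{SliceCheck} (Line~\ref{alg:slice-check-fun}); provided the shift round has not reached $\f{\Miners}$, the miner increments $\ShiftRound$ and rotates its slice index, invoking \MName{FindNonce} on the new slice (Lines~\ref{alg:increase-round}--\ref{alg:find-nonce}). Consequently the miner that previously held $\Slice{(i-1)\bmod\n{\Miners}}$ now begins searching $\Slice{i}$, which is exactly the claim.

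The main obstacle is the first step: rigorously arguing that non-settlement of block $\BlockHeight$ propagates into a coordinated timeout of $\f{\Miners}+1$ correct miners within a single shift round. This needs the synchrony assumption so that correct miners start and exhaust their timers $\delta$ within a bounded window, and it needs the broadcast-and-settle contradiction above to rule out the case where a correct miner's current slice actually contains the nonce. A secondary subtlety is the boundary $\ShiftRound=\f{\Miners}$, at which \MName{SliceCheck} triggers a merge rather than a further rotation (Line~\ref{alg:no-nonce-possible}); I would note that during the $\f{\Miners}$ permitted rounds each of the at most $\f{\Miners}$ slices originally held by faulty miners is rotated onto a correct miner, so $\Slice{i}$ is searched by a correct miner before the merge path is ever reached.
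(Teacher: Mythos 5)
Your proof follows the same route as the paper's: timer expiry triggers \MName{Shift} broadcasts, a certificate of $\f{\Miners}+1$ such messages is committed via the liveness of $\Service$ and \Cref{th:commit-availability}, and \MName{SliceCheck} then rotates $\Slice{i}$ onto the miner previously holding $\Slice{(i-1)\bmod\n{\Miners}}$. You are in fact somewhat more careful than the paper's own proof, which simply assumes that $\f{\Miners}+1$ \MName{Shift} messages materialize, whereas you justify this (under synchrony) by arguing that an unsettled block forces every correct miner to time out, and you also handle the $\ShiftRound=\f{\Miners}$ merge boundary explicitly.
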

\begin{proof}
	While finding nonce, if the timer $\delta$ expires for an honest miner,
	it broadcast a message $\textsc{Shift}$ to other miners.
	When it receives $\textsc{Shift}$ message from $\f{\Miners}+1$ miners
	it creates a certificate $\Certificate$ comprising of these $\MName{Shift}$
	messages and broadcasts this $\Certificate$ to all the replicas.
	When an honest replica receives $\Certificate$ from at least $\f{\Miners}+1$,
	it proposes it in the next block.
	The liveness property of $\Service$ ensures that these messages are
	eventually committed, and \Cref{th:commit-availability} ensures that
	honest miners are eventually notified of the commit.
	On receiving a committed $\Service$-block with a certificate $\Certificate$,
	an honest miner $\Miner_i$
	resets its timer
	%(Line~\ref{alg:increase-round} of \Cref{alg:miners})
	and restarts the nonce finding process in
	slices $\{ \Slice{i+1}, ... , \Slice{j}, \Slice{o} \}$,
	where $o = (j+1) \mod \TotalSlices$ ($\TotalSlices$ is the total number of slices)
	%(with $j = (i+1)~\bmod~\n{\Miners}$, see Line~\ref{alg:find-nonce} of \Cref{alg:miners})
	if it has not already attempted to find a nonce for the block for $\f{\Miners}+1$.
	Otherwise, it assumes that no nonce can be found for this block and starts a new mining.
	As a result, honest miners keep shifting and searching for each other nonces until they are all found.
\end{proof}

\begin{lemma}[Block Settlement] \label{th:block-settlement}
	All honest miners settle block $b$ if a valid nonce $\Nonce$ for block $b$ exists.
\end{lemma}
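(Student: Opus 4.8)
The plan is to settle block $b$ by chaining the three liveness lemmas already in hand and then pushing the discovered nonce through $\Service$. The starting observation is purely set-theoretic: since the slices form a partition of the solution space ($\Slice{1}\cup\cdots\cup\Slice{\n{\Miners}}=\Slice{}$ with pairwise empty intersections), a valid nonce $\Nonce$ for block $b$ lies in exactly one slice; call it $\Slice{o}$. The entire lemma therefore reduces to showing that some honest miner eventually searches $\Slice{o}$, broadcasts $\Nonce$, and that this announcement is committed by $\Service$ and delivered to every honest miner.

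First I would invoke Lemma~\ref{th:nonce-search} (Nonce Search): once the $\BlockHeight$-th block with $\sigma>0$ committed $\Service$-blocks becomes available, every honest miner enters \MName{FindNonce} on its initially assigned slice. If the miner holding $\Slice{o}$ in shift round $\ShiftRound=0$ is honest, it immediately discovers and broadcasts $\Nonce$ and the argument short-circuits. The interesting case is when a malicious miner owns $\Slice{o}$ and withholds $\Nonce$, forcing the honest miners to time out on $\delta$.

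Next I would apply Lemma~\ref{th-shift-liveness} (Shift Liveness) iteratively: whenever the owner of $\Slice{o}$ fails to announce within $\delta$, honest miners collect $\f{\Miners}+1$ $\MName{Shift}$ messages, obtain a committed shift certificate $\Certificate$ from $\Service$, and rotate via the reassignment $j=(i+1)\bmod\n{\Miners}$. The key combinatorial step is a counting argument: as $\ShiftRound$ ranges over $0,1,\ldots,\f{\Miners}$, the fixed slice $\Slice{o}$ is successively owned by $\f{\Miners}+1$ distinct miners; because at most $\f{\Miners}$ miners are malicious, at least one of these owners is honest. Hence within at most $\f{\Miners}$ shifts an honest miner searches $\Slice{o}$ and broadcasts $\MName{NonceFind}$ carrying $\Nonce$.

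Finally I would close the loop through $\Service$: an honest miner gossips $\MName{NonceFind}$ to the replicas, which by the liveness of $\Service$ under synchrony eventually commit an $\Service$-block containing it; by Lemma~\ref{th:commit-availability} (Commit Availability) every honest miner receives that committed $\Service$-block, verifies $\Nonce$ in \MName{NonceCheck}, and settles block $b$. I expect the main obstacle to be making the rotation/counting argument airtight---precisely tracking which miner owns $\Slice{o}$ as $\ShiftRound$ increases under the modular reassignment, and confirming that the at-most-$\f{\Miners}$-round bound of the shift protocol coincides exactly with the window in which an honest owner of $\Slice{o}$ is guaranteed to appear. A secondary care point is to claim liveness only during periods of synchrony, matching the hypotheses under which Lemmas~\ref{th:commit-availability} and~\ref{th-shift-liveness} hold.
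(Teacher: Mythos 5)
Your proposal is correct, but it proves more than the paper's own proof of this lemma does, and the two arguments are organized differently. The paper's proof of Block Settlement begins at the moment an honest miner \emph{has already found} $\Nonce$: it broadcasts $\Nonce$ to the other miners, each miner gossips it to the replicas of $\Service$, the liveness of $\Service$ commits it, \Cref{th:commit-availability} delivers the committed $\Service$-block to every honest miner, and \MName{NonceCheck} settles $b$. The entire question of \emph{why} some honest miner eventually finds $\Nonce$ --- your partition observation, the invocation of \Cref{th:nonce-search}, and the rotation/counting argument over \Cref{th-shift-liveness} showing that the slice $\Slice{o}$ containing $\Nonce$ passes through $\f{\Miners}+1$ distinct owners and hence at least one honest one --- is deferred by the paper to the proof of the top-level Liveness theorem, which chains Nonce Search, Shift Liveness, and Block Settlement in exactly the order you reconstruct. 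Your reading is arguably the more faithful one to the literal statement (whose hypothesis is only that a valid nonce \emph{exists}, not that an honest miner has discovered it), and your counting argument is sharper than the paper's, which settles for ``an honest miner can find a nonce in its slice within time $\delta$ with non-zero probability; if it doesn't, Shift Liveness ensures other honest miners will try.'' The one caveat your own closing remarks already anticipate: the counting argument guarantees an honest \emph{owner} of $\Slice{o}$ within $\f{\Miners}$ shifts, but not that this owner exhausts $\Slice{o}$ before its timer $\delta$ expires; like the paper, you must implicitly assume $\delta$ is large enough (or that repeated shifts eventually return $\Slice{o}$ to an honest miner that succeeds). With that understood, your argument is sound and subsumes the paper's; it simply relocates work that the paper distributes between this lemma and the Liveness theorem.
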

\begin{proof}
	When an honest miner $\Miner_i$ finds a nonce $\Nonce$,
	it broadcasts $\Nonce$ to the other miners. %(Line~\ref{alg:nonce-found} of \Cref{alg:miners}).
	Following this, each miner submits $\Nonce$ to the replicas of $\Service$.
	%(Line~\ref{alg:nonce-pbft} of \Cref{alg:miners}).
	The liveness property of $\Service$ ensures that $\Nonce$ is eventually committed
	by all honest replicas.
	\Cref{th:commit-availability} then ensures that all honest miners obtain the
	corresponding committed $\Service$-block.
	We conclude the proof by noting that if the $\Service$-block contains a valid
	nonce $\Nonce$, correct miners will settle $b$. %(Line~\ref{alg:nonce-terminate} of \Cref{alg:miners}).
\end{proof}

\begin{theorem}[Liveness] \label{th:liveness}
	Each committed block at $\Service$ is eventually settled by \PoC{}.
\end{theorem}
\begin{proof}
	Through the liveness property of $\Service$ and \Cref{th:commit-availability}, we conclude that honest miners eventually
	obtain a block committed by replicas of $\Service$.
	\Cref{th:nonce-search} then ensures that miners search for a valid nonce $\Nonce$ to settle this committed $\Service$-block as part of a block $b$.
	Finally, an honest miner can find a nonce in its slice $\Slice{i}$ within time $\delta$ with non-zero probability.
	If it doesn't, \Cref{th-shift-liveness} ensures that honest miners will try again until they succeed.
	As a result, honest miners eventually find a nonce $\Nonce$ for block $b$ in their slice within time $\delta$.
	\Cref{th:block-settlement} then ensures that honest miners use $\Nonce$ to settle block $b$ and thus the committed $\Service$-block.
\end{proof}

%\textbf{Fairness Argument.}
%In~\S\ref{s:fair_argument}, we argue that \PoC{} is fair and no correct miners are penalized during periods of synchrony.

\section{Fairness Argument.}\label{s:fair_argument}
We argue that \PoC{} is fair and no correct miners are penalized during periods of synchrony.

\begin{lemma}[Penaly Certificate] \label{th:penalty-certificate}
	There cannot be a penalty certificate $\SignMessage{\Message{Penalty}{\BlockHeight, \ShiftRound{}, \Certificate}}{\Miners}$ unless the timer $\delta$
	of at least one honest miner expires.
\end{lemma}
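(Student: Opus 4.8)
The plan is to trace the existence of a penalty certificate back through the protocol to an honest timer expiry, applying the standard honest-quorum argument twice. Throughout I use the fault model of \Cref{sec:preliminaries}: at most $\f{\Miners{}}$ miners are malicious, so every set of $\f{\Miners{}}+1$ distinct miners contains at least one honest miner, and signatures are unforgeable under the authenticated-communication assumption.

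First I would unfold what a penalty certificate is. By Line~\ref{create-penalty-cert} of \Cref{alg:miners}, a penalty certificate $\PenaltyCert$ is well-formed only if it aggregates $\f{\Miners{}}+1$ distinct signed $\MName{Penalty}$ messages. Since a coalition of at most $\f{\Miners{}}$ malicious miners cannot forge the signature of a missing honest miner, at least one of these $\MName{Penalty}$ messages was genuinely broadcast by an honest miner $\Miner_h$.

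Next I would isolate the single code path on which an honest miner emits $\MName{Penalty}$: inside $\MName{NonceCheck}$, the broadcast on Line~\ref{alg:penalty-bcast} is guarded by $\ShiftRound > 0$ (Line~\ref{alg:penalize-miners}). Hence $\Miner_h$'s local shift round was strictly positive. An honest miner initializes $\ShiftRound = 0$ in $\MName{NewMine}$ (Line~\ref{alg:start-timer}) and increments it only in $\MName{SliceCheck}$ (Line~\ref{alg:increase-round}), a branch that executes only after receiving a committed $\Service$-block carrying a shift certificate $\Certificate$ (Line~\ref{alg:slice-certificate}). So some shift certificate $\Certificate$ was committed and observed by $\Miner_h$.

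Finally I would repeat the quorum argument on $\Certificate$. By Line~\ref{alg:shift-send}, a shift certificate is well-formed only if it contains $\f{\Miners{}}+1$ distinct signed $\MName{Shift}$ messages, so at least one was sent by an honest miner $\Miner_{h'}$; and an honest miner broadcasts $\MName{Shift}$ only upon the expiry of its timer $\delta$ (Line~\ref{alg:timer-expire}). Therefore $\Miner_{h'}$'s timer expired, which is exactly the claim. The main obstacle is not conceptual but a careful protocol audit: I must confirm that honest miners emit $\MName{Penalty}$ and $\MName{Shift}$ on no other code path, and in particular that an honest miner's $\ShiftRound$ can never become positive except by consuming a committed shift certificate\textemdash thereby excluding any route by which malicious miners could inflate honest state without first forcing a genuine timeout.
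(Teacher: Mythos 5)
Your proposal is correct and rests on the same core idea as the paper's proof: a certificate requires $\f{\Miners{}}+1$ distinct signatures, so at least one signer is honest. The difference is one of completeness rather than of route. The paper applies this quorum-counting step once, to the $\MName{Penalty}$ certificate, and immediately declares a contradiction; it never explains why an honest miner broadcasting a $\MName{Penalty}$ message entails that some honest timer expired. You fill exactly that gap: you observe that the honest $\MName{Penalty}$ sender must have had $\ShiftRound > 0$, that an honest miner's $\ShiftRound$ becomes positive only by consuming a committed shift certificate $\Certificate$, and you then apply the quorum argument a second time to $\Certificate$ to extract an honest $\MName{Shift}$ sender, whose only emission path is the expiry of its timer $\delta$ (Line~\ref{alg:timer-expire} of \Cref{alg:miners}). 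Your two-level trace is the argument the paper's one-line proof actually needs, and your closing caveat\textemdash that one must audit the algorithm to confirm $\MName{Penalty}$ and $\MName{Shift}$ have no other emission paths and that $\ShiftRound$ cannot be inflated except via a committed certificate\textemdash is precisely the right thing to check; it holds for the pseudo-code in \Cref{alg:miners} given the paper's standing assumption that miners accept only well-formed messages.
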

\begin{proof}
	We start by assuming that there exists a penalty certificate $\SignMessage{\Message{Penalty}{\BlockHeight, \ShiftRound{}, \Certificate}}{\Miners}$
	and %the timer of $\f{\Miners{}}+1$ correct miners did not expire.
	the timers of none of the honest miners have expired.
	As each penalty certificate needs signatures of at least $\f{\Miners}+1$ miners and at most $\f{\Miners}$ miners are malicious,
	such an assumption is a contradiction.
\end{proof}

\begin{theorem}[Fairness] \label{th:incentives}
	No honest miner receives a penalty if (i) it can find a nonce $\Nonce$ within time $\delta$ and (ii) the network is experiencing a period of synchrony.
\end{theorem}
\begin{proof}
	Let's assume an honest miner $\Miner_i$ receives a penalty based on shift round $r$. %(Line~\ref{alg:miners-penalty} of \Cref{alg:miners}).
	This implies the existence of a penalty certificate including miner $\Miner_i$ in its list of miners to penalize.
	\Cref{th:penalty-certificate} states that this certificate can only exist if the timer $\delta$ of at least {\em one} honest  miner expires.
	Since at least $\f{\Miners{}}+1$ miners are honest, they will only penalize $\Miner_i$ if they did not receive its nonce before $\delta$.
	This implies that either miner $\Miner_i$ did not find its nonce before $\delta$ (which is a direct contradiction of assumption (i)), or that its nonce did not reach the $\f{\Miners{}}+1$ honest miners before their timer expires (which is a direct contradiction of assumption (ii)).
\end{proof}

\section{Evaluation}
\label{s:eval}
Our evaluation aims to study the scalability and failure handling capability of \PoC{}. 
%answer the following:
%\begin{enumerate}
%    %\item Scalability of \pbft{} consensus in \ResDB{}. (\S\ref{eval:pbft-batch-scale})
%    \item 
%    Performance of \PoC{} vs. \PoW{}. (\S\ref{eval:poc-batch-scale})
%    \item Failure handling of \PoC{}. (\S\ref{eval:fail})
%          %\item \ResDB{} + \PoC{} versus other blockchains in terms of both performance and energy consumption? (\S\ref{eval:all})
%    \item Impact of appending \PoC{} to real-world systems. (\S\ref{eval:diablo})
%\end{enumerate}

{\bf Implementation.}
We implement \PoC{}\footnote{https://anonymous.4open.science/r/poc-07E0/README.md} on top of the Apache ResilientDB, written in C++~\cite{apache-resdb};
\ResDB{} provides access to a scalable blockchain framework with APIs to implement and test new protocols.
It provides access to an optimized implementation of the \pbft{} consensus protocol.
Our \PoC{} implementation has an {\sf LOC} count of {\sf 1,300}, while \ResDB{} has {\sf 28,000 LOC}.

{\bf Setup.}
We run experiments on AWS c5.9xlarge (36 vCPUs and 72 GiB memory), up to $128$ miners and clients.
{\em Unless explicitly stated}, we use the following setup: 
deploy $128$ replicas in \ResDB{} to run \pbft{} consensus on $\Service$-blocks of size $100$.
In each experiment, first $20\%$ of the time we set as warmup and results are collected over the remaining time period.
We average results over {\em five} runs.
We use {\em ED25519}-based \DS{} for signing messages; \pbft{} makes use of {\em CMAC} for replica-to-replica communication.
Clients issue requests in a closed-loop; a client sends its next transaction only after it
receives response for its previous transaction.

{\bf Benchmark.}
We run two types of experiments: 
(1) impact of appending \PoC{} to \pbft{}.
(2) impact of appending \PoC{} to real-world blockchains.
For the first experiment, clients create  {\em YCSB}~\cite{ycsb,blockbench} transactions from the Blockbench~\cite{blockbench} framework. 
%and issue these transactions to the \pbft{} consensus.
These single-operation transactions are key-value store operations that access a database of $\SI{600}{k}$ records.
For the second experiment, we extend the \Diablo{} benchmarking framework~\cite{diablo}, which 
issues client transactions to four state-of-the-art blockchain platforms.

\begin{figure}[t]
    \centering
    \setlength{\tabcolsep}{1pt}
    \begin{tabular}{cccc}
        \PbftBatchtput & \PbftBatchlat \\
        \pbfttput      & \pbftlat
    \end{tabular}
    \caption{Evaluation of \ResDB{} architecture.}
    \label{fig:resdb-eval}
\end{figure}

\subsection{Scalability of \ResDB{}}
\label{eval:pbft-batch-scale}
First, we illustrate the scalability of \ResDB{} fabric. 
This experiment serves as a baseline for the future experiments,  illustrating the impact of \PoC{} on the system throughput.
In Figures~\ref{fig:resdb-eval}(a) and (b), we measure the peak throughput (transactions per second) and latency for
\pbft{} consensus protocol on varying the block size ($\Service$-block) from $1$ to \SI{1}{k} on a system of $128$ replicas.

We observe that although \pbft{} hits its peak throughput at $\Service$-block size of $150$, 
the optimal $\Service$-block size is $100$ as the throughput is only $11.5\%$ less than the peak, while the latency is $3\times$ lower.

Beyond an $\Service$-block size of $150$, there is no increase in throughput because the queues that store messages at replicas are full and
can no longer process newer requests, which increases the wait time (latency) for clients.

Next, in Figures~\ref{fig:resdb-eval}(c)-(d), we increase the number of replicas from $16$ to $128$;
$\Service$-block size is $100$.
Unsurprisingly, on increasing the number of replicas,
there is a drop in the peak throughput (consequential increase in latency)
because there is a corresponding increase in the number of messages communicated per consensus; 
on moving from $16$ to $128$ replicas, the throughput drops by $86.8\%$.

%In the case of \DiemBFT{}, the no-mempool version has a much lower throughput as the implementation
%bottlenecks at the primary replica, which is forced to broadcast all requests.\footnote{Based
%    on our communication with \DiemBFT's authors, we were informed that the no-mempool version
%    of \DiemBFT{} is unstable and has not been tested with more than $30$ replicas.}
%\DiemBFT's mempool version requires replicas to wait for \SI{100}{ms} or to receive \SI{500}{Kb}
%of transactions (whichever comes first) before propagating the batch.
%As a result, \DiemBFT's mempool variant trades latency for higher throughput.
%For \pbft{}, we observe a linear decrease in throughput when scaling from $16$
%to $128$ replicas while sustaining well over 100,000 txns/s with nearly 1 second latency
%even at $128$ replicas, which outperforms \DiemBFT's enhanced mempool architecture.

\begin{figure}[t]
    \centering
    \scriptsize
    \begin{tabular}{|c|c|c|c|c|c|c|}
        \hline
        \makebox[0.05\textwidth]{\multirow{2}{*}{{\bf Protocol}}} & \multicolumn{3}{c|}{{\bf Mining Time (s)}} & \multicolumn{3}{c|}{{\bf Latency (s)}}                                                           \\ \cline{2-7}
                                        & \makebox[0.03\textwidth]{\bf $D=8$}                                & \makebox[0.03\textwidth][c]{\bf $D=9$}                            & \makebox[0.03\textwidth]{\bf $D=10$} & \makebox[0.03\textwidth]{\bf $D=8$} & \makebox[0.03\textwidth]{\bf $D=9$} & \makebox[0.03\textwidth]{\bf $D=10$} \\ \hline

        \makecell{\PoW{} (Sequential)}  & $696$                                      & $>2h$                                 & $>5h$          & $1233$       & $>3h$      & $>10h$         \\ \hline

        \PoW{} (Random)                          & $328$                                      & $4108$                                 & $>5h$          & $652$       & $8200$      & $>10h$         \\ \hline

        \makecell{\PoW{} + $30\%$ \PoC{}}          & $7$                                        & $165$                                  & $3103$       & $17$        & $303$       & $6950$       \\ \hline

        \rowcolor{green}
        \PoC{}                          & $3$                                        & $48$                                   & $738$        & $6$         & $84$        & $1260$       \\ \hline
    \end{tabular}
    \caption{Time to find a nonce by a system of $128$ miners while ensuring that these mining protocols meet \ResDB{} \pbft's throughput of $1$k blocks per second.}
    \label{tb:batch-size-poc}
\end{figure}

\begin{figure}[t]
    \centering
    \scriptsize
    \begin{tabular}{|c|c|c|}
        \hline
        %\multicolumn{3}{|c|}{{\bf Possible Solutions}}   \\ \hline
        \makebox[0.03\textwidth]{\bf $D=8$}                                & \makebox[0.03\textwidth][c]{\bf $D=9$}                            & \makebox[0.03\textwidth]{\bf $D=10$} \\ \hline
        5 possible solutions on average & 2 possible solutions on average & 1 solution  \\ \hline
    \end{tabular}

    \caption{Possible solutions on different difficulties.}
    \label{tb:poss-solution}
\end{figure}

\subsection{Scalability of \PoC{}}
\label{eval:poc-batch-scale}
In this set of experiments, we measure the time it takes for various mining schemes to append a block to the ledger.
In Figure~\ref{tb:batch-size-poc}, we compare \PoC{} against three baselines:
(1) Bitcoin's \PoW{} consensus, where each miner selects value uniformly at random and checks if it is a valid nonce. 
(2) \PoW{} consensus where each miner sequentially iterates over each value in the search space (starting from $0$) 
until it finds a valid nonce,
(3) \PoW{} with $30\%$ miners running \PoC{}, which allows us to approximate the impact of the largest centralized mining pool in Bitcoin~\cite{mining-pool-stats}.

In this experiment, we want to meet the following three goals:
(1) The mining scheme reaches the same throughput (blocks added to the ledger per second)
as \ResDB's throughput at $128$ replicas, which is approximately $100$k txns/s (or $1$k blocks per second).
(2) Aggregate precise number of $\Service$-blocks into a mined block so that we can observe least latency
(difference between time an $\Service$-block is received to the time it is added to the ledger).
(3) \PoC{} can address multiple solutions.
For example, for \PoC{}, the time to mine a block at $D=8$ is $3$s, so we add $3$k $\Service$-blocks in each mined block.
The latency for each experiment is approximately twice the mining time because
each transaction waits for a time equivalent to the mining time during the block generation phase and the mining process.
We observe that sequential \PoW{} mining requires at least $2\times$ more mining time and latency than randomized mining.
In comparison, \PoW{} with $30\%$ mining pool does reduce the mining time and latency.
However, at $D=10$, \PoC{} yields up to $4.2\times$ and $29\times$
less mining time than $30\%$ \PoC{} and \PoW{}, respectively.
In \Cref{tb:poss-solution},
we also demonstrate that multiple solutions may be located in different slices when $D$ is smaller. 
Like at $D=8$, each block contains around 5 solutions. 
However, \PoC{} ensures that each miner obtains the same solution to guarantee the safety of the block.

\begin{figure*}
    \centering
    \setlength{\tabcolsep}{2pt}
    \scalebox{0.82}{\ref{mainlegend}}\\[3pt]
    \begin{tabular}{ccccc}
        \DiabloNasTput & \DiabloVisaTput & \DiabloFifaTput & \DiabloDotaTput & \DiabloYoutubeTput \\
        \DiabloNasLat  & \DiabloVisaLat  & \DiabloFifaLat  & \DiabloDotaLat  & \DiabloYoutubeLat
        %& \DiabloUberTput 	
        %& \DiabloUberLat  
    \end{tabular}
    \caption{Impact of appending \PoC{} (running at difficulty $D=8$) to different Diablo blockchains.}
    \label{fig:diablo_chains}
\end{figure*}

\begin{figure}[t]
    \centering
    \setlength{\tabcolsep}{1pt}
    \begin{tabular}{cc}
        \EvalMinerTput{a}   & \EvalMinerLatency{b}   \\
        \EvalLargeShiftTput & \EvalLargeShiftLatency
    \end{tabular}
    \caption{Peak throughput and average latency attained by \PoC{} under different conditions at $D=8$.}
    \label{fig:hybridchain-eval}
\end{figure}

Next, in Figures~\ref{fig:hybridchain-eval}(a) and (b), we increase the number of miners from $64$ to $128$ while fixing the difficulty, $D=8$.
For each setting, there is a specific size of mined block at which it sustains the throughput of \pbft{} and achieves least latency.
We observed $3$k, $5$k, and $7$k to be such block sizes.
We know that the peak throughput of \pbft{} is $100$k txns/s and latency at $D=8$ is around $6$s.
Thus, a \PoC{} protocol with $64$ miners hits the peak performance at $7$k while
a \PoC{} protocol with $96$ miners hits peak performance at $5$k.
Thus, we can conclude that \PoC{} is non-invasive and has minimal impact on the system throughput.

\subsection{Resilience to Failures}
\label{eval:fail}
Next, we illustrate the effect of failures on \PoC{}
by studying two types of failures: 1 malicious miner and No nonce (\S~\ref{ss:slice-shifting})
in Figures~\ref{fig:hybridchain-eval}(c) and (d).
%we issue a failure at the $10$-th block, which causes the honest miners to time out and start slice shifting protocol.
%
In the malicious miner experiment, we simulate a Byzantine miner, which does not broadcast the solution of the $10$-th block,
which causes remaining miners to timeout and perform one round of the slice-shifting protocol.
In the no nonce experiment, we ensure that no nonce satisfies the $10$-th block, which leads to $\f{\Miners}+1$ rounds of slice shifting.
These experiments cause the latency to shoot up for the $10$-th block (up to $5\times$ for the no nonce case).
To quickly bring latency and throughput to the steady state, we merge the blocks in the no nonce case.

%As a result, we also select the highest
%block size necessary to hit \ResDB{}'s throughput. Despite this, both experiments have a
%similar observable trend. There is a drop in throughput at block size $10$ as miners timeout.
%Due to this, there is a large stack of pending transactions, which allows miners to
%aggregate more transactions in their block and leads to momentary higher throughput.
%However, this period soon fades away and the system reaches stable throughput.

%
%D=8, block size = 3k.
%Mining time limit = 10s.
%For no solution, we skip the 10th block to mine.
%For a malicious, the solver does not broadcast the solution for the 10th block.
%Trace the average tps and lat every 5 blocks.
%If no solution, we need to shift f+1 rounds. So in 128 nodes setting, we need to shift 43 rounds. The latency will go up 430s in the 10th block then the latency goes back to around 6 second at the 40th block.
%If there is a malicious, we will shift 1 round, so the latency will go up 10s and quickly goes back in the next block.
%If there is a shift, we will merge the current block to the next block. Also,  also enlarge the block size to (limit time/mining time) to consume the pending blocks, that is 10s/3s ~ 4blocks.
%Thus, if there is no solution, we will enlarge the block size to 12k for the following 43 rounds.
%For the 1 malicious case, enlarge the block size to 12k in the next round.
%Then everything goes back to normal.

\subsection{Appending \PoC{} to Blockchains in the Wild.}
\label{eval:diablo}
Finally, we illustrate that \PoC{} can be integrated to state-of-the-art blockchain systems
to guard them against long-range attack with minimal impact.
We append \PoC{} to {\em four} blockchains part of the \Diablo~\cite{diablo} framework: \Diem{}~\cite{diembft}, 
\Algorand{}~\cite{algorand}, \Quorum{}~\cite{quorum} and \Ethereum{}~\cite{ether}.
Our primary goal is to showcase that appending \PoC{} to these blockchains
causes minimal impact on their performance.

\Diablo{} provides access to a
framework, written in Golang, for evaluating popular blockchain systems.
\Diablo{} is composed of three types of nodes: \textit{primary}, \textit{secondary}, and \textit{chain} nodes.
The primary node is responsible for generating transactions and delivering
them to the chain nodes via secondary nodes (mimicking a mempool functionality).
The secondary nodes assist the primary node in reducing resource overhead, e.g., CPU and network bandwidth,
by disseminating transactions to the chain nodes and collecting results. The chain nodes run various blockchains.
%The primary node further regulates the number of transactions per second (TPS) injected into the system, thus controlling the transactional load.

To incorporate \ResDB{} into Diablo framework,
we developed a Go SDK that provides an interface to send transactions from \Diablo{} to \ResDB{}.
To append \PoC{} to different chains, we implement a \texttt{Go-Server} agent on each chain node
that periodically fetches blocks from the local chain on the chain node through their SDKs implemented using Golang.
Specifically, the Go-Server provides a unified entry for \PoC{} to obtain the block data.
Note that in \Diablo{} clients submit requests in an open loop.

    {\bf Setup.}\label{eval:diablo_setup}
Like Diablo authors, we use AWS c5.9xlarge (36 vCPUs, 72 GiB memory) machines and deploy one primary and 10 secondaries.
We use the workloads provided by \Diablo{}.
The maximum observed throughput (in transactions per second or TPS) of these blockchains
as per the original paper is:
%%adopted workloads introduced in \Diablo{} with a wide
%%range of real-world applications that have throughput (TPS) taken from their production deployment.
%%These applications are as follows along with their maximum traced TPS, 
NasDAQ (168 TPS),
Visa (1000 TPS), Fifa (3483 TPS), Dota 2 (13303 TPS), and Youtube (37976 TPS).

    {\bf Results.}\label{eval:poc_chains}
Figure~\ref{fig:diablo_chains} summarizes our findings.
\Diem{} can sustain 1000 TPS and \Quorum{} only 204 TPS; their performance drops significantly when the workload increases.
\Ethereum{} attains a consistently low throughput in part due to its default block generation
period (15 seconds). \Algorand{} exhibits a more stable performance of
500-600 TPS across workloads. All blockchains suffer from higher latency (as expected) when
the load increases, in a sense, artificially over-saturating the system.
We observe that \ResDB{} can achieve a high throughput of 37,967 TPS, which nearly matching the injected load.
When \PoC{} is added to any system, we observe that the throughput drops
at most by 10\%-20\% due to an increased communication and added network latency.
We argue that this is a negligible cost as \PoC{} helps these blockchains prevent
long-range attacks, which continues to be a major vulnerability in their design.

%Our primary goal is to showcase that appending \PoC{} to these blockchains
%causes minimal impact on their performance.

\section{Related work}
\label{s:related}
\BFT{} has been studied extensively in the literature~\cite{pbftj,zyzzyva,iss,leaderless-consensus,malrec,blockchain-info-share,gem2tree,next700bft,xft,bft-forensics,lineagechain,loghin2022blockchain, chacko2023optimize, xiang2023practical, Wu25SigDetect, azouvi2022pikachu, deirmentzoglou2019survey, jain2023extending}. 
%ftn-consensus
%
A sequence of efforts~\cite{pbftj,gueta2019sbft, hotstuff, kogias2016enhancing,bft-to-cft,occlum,shadoweth,eactors,bft-performance-framework,streamchain,borealis, li2018decentralized, chen2025blockchain, messadi2022splitbft, albouy2023asynchronous, zhang2019actor}
have been made to reduce the communication cost of the \BFT protocols:
(1) linearizing \BFT{} consensus~\cite{hotstuff,sbft}, %poe
(2) optimizing for geo-replication~\cite{steward}, and
(3) sharding~\cite{ahl,sharper,basil,el2019blockchaindb, amiri2019caper}.
%HotStuff~\cite{hotstuff} that has $O(n)$ cost for decisions, 
Nevertheless, all of these protocols face long-range attacks~\cite{deirmentzoglou2019survey, azouvi2022pikachu}.

Alternatively, prior works have focussed on designing \PoS{} protocols that
permit the node with the highest stake to propose the next block~\cite{pos-anonymity,algorand,cardano,king2012ppcoin,ouroboros}.
However, even these protocols suffer from long-range attacks if adversary has access to the private keys.

Existing work to protect against long-range attacks includes: 
(1) checkpointing the state through a trusted committee~\cite{winkle,bitcoin-bitter-better,pos-weak-subjectivity,snow-white},
(2) key-evolving cryptographic techniques~\cite{pixel,ouroboros,algorand},
(3) verifiable delay functions~\cite{solana}, and
(4) appending the state to Bitcoin~\cite{bitcoin-pos,azouvi2022pikachu}. 

(1) {\em Trusted Committee} solutions aim to periodically checkpoint the state of a \PoS{} blockchain on another canonical chain, which is maintained by a 
committee of trusted members~\cite{winkle,bitcoin-bitter-better,pos-weak-subjectivity,snow-white, wang2020sperax}.
These systems assume that the trusted members cannot be compromised, and thus, 
new nodes that wish to join the system can distinguish between the \PoS{} blockchain and the canonical chain.
Moreover, small size committees mimic a centralized system, while large committees increase latency for checkpoints.

(2) {\em Key-evolving cryptographic techniques} force participants to periodically discard old keys and generate new keys~\cite{pixel,ouroboros,algorand, boyle2021breaking}.
These works assume that honest nodes will discard their old keys after they generate a new pair;
the onus is on the honest nodes.

(3) {\em Verifiable delay functions} provide proof that helps differentiate between a ledger created long ago versus a recently created adversarial ledger~\cite{solana, yu2024smart, choi2025smart}.
However, nothing prevents an adversary from initiating the creation of the adversarial ledger at the time of genesis. Once it has access to the private keys of other nodes, it can build blocks on top of this ledger, which makes it impossible for a new node to distinguish between the two ledgers.

(4) {\em Appending the state to Bitcoin} is a popular solution against long-range attacks for many recent papers~\cite{tas2022babylon,bitcoin-pos,azouvi2022pikachu}.
Bitcoin employs \PoW{} consensus, which requires miners to compete and thus wastes computational resources.
Prior solutions to improve \PoW{} or Bitcoin~\cite{byzcoin,pow-improve-1,bitcoinng} do not eliminate this competition.
%
%However, these protocols are designed ad-hoc assuming an external chain for security, and additionally, lack rigorous performance evaluations.
%Combined with the fact that as we show in our experiments classic \PoS{} protocols are orders of magnitude slower than permissioned ones such as
%\pbft{}~\cite{pbftj} and \DiemBFT~\cite{baudet2019state} it seems that the state-of-the-art is either performant or secure against
%long-range and stake-bleeding attacks. 

With \PoC, we show how to make it computationally expensive for an adversary to rewrite the ledger while 
forcing miners to collaborate and conserve their computational resources.
Further, we show experimentally that collaboration helps \PoC{} to waste fewer resources for a given difficulty.

%\subsection*{Proof-of-Work}
%
%Bitcoin~\cite{nakamoto2008peer} is the first permissionless cryptocurrency, 
%which manages to achieve consensus in a synchronous network~\cite{pass2017analysis} 
%through the use of the Proof-of-Work algorithm. After Bitcoin multiple other PoW algorithms were proposed such as the use of DAGs~\cite{lewenberg2015inclusive}
%or Trees~\cite{sompolinsky2013accelerating}. 
%Nevertheless all these protocols have two shortcomings: (i) they work on strict synchrony losing safety 
%when the network is not reliable and (ii) make miners compete with each other resulting in at least some 
%amount of wasted work. PoC fundamentally redigns how computation should be used. In our design, miners do not complete, 
%but collaborate making sure that everyone has a fair chance of participating without the need for the mining pool~\cite{romiti2019deep}.
%Additionally, when combining PoC with \DualChain we show that safety can still be preserved during asynchrony thanks to the tight 
%coupling of PoC with PBFT.

\section{Conclusions}
In this paper, we presented our novel \PoC{} protocol, which, when appended to existing \PoS/\BFT{} protocols, guards them against long-range attacks. 
Like \PoW{}, \PoC{} makes it computationally expensive for an adversary to rewrite the ledger. 
However, unlike \PoW{}, \PoC{} introduces collaborative mining that requires miners to work with each other instead of competing.

\section*{Acknowledgment}
  This work is partially funded by NSF Award Number 2245373.

\bibliographystyle{IEEEtran}
\bibliography{refined}

\newpage
%\appendices
%\input{discussion}
%\input{proofs}
%\input{appendix}

\end{document}